\documentclass{article}
\usepackage{amsmath,amsfonts,amssymb,amsthm,graphicx}
 \usepackage{dsfont}    
\usepackage{enumerate}
\usepackage{color}
\usepackage{natbib}
\usepackage{multirow}
\usepackage{comment}

\newcommand{\ntp}{\mathbf{P}}
\newcommand{\ent}{\mathrm{Ent}}

\newcommand{\bP}{\mathbf{P}}

\newcommand{\uu}{\mathbb{U}}

\usepackage{booktabs,subcaption,dcolumn}

\newcommand{\lawis}{\buildrel \mathrm{d} \over \sim}

\usepackage{algorithm}
\usepackage{algorithmic}
\usepackage[colorlinks=true,urlcolor=black,citecolor=blue,pdfpagemode=UseNone,pdfstartview=FitH]{hyperref}

\newif\ifFULL
\ifFULL

\fi

\newif\ifRuodu
\ifRuodu
  \renewcommand{\ge}{\geqslant}
  \renewcommand{\le}{\leqslant}
  \renewcommand{\epsilon}{\varepsilon}
  
\fi

\renewcommand{\d}{\mathrm{d}}

\newcommand{\p}{\mathbb{P}}
\newcommand{\q}{\mathbb{Q}}

\newcommand{\E}{\mathbb{E}}    
\newcommand{\R}{\mathbb{R}}    
\newcommand{\N}{\mathbb{N}}    
\newcommand{\T}{\mathbb{T}_0} 
\newcommand{\TP}{\mathbb{T}}

  \newcommand{\id}{\mathds{1}}

\usepackage{setspace}

\theoremstyle{plain}
\newtheorem{theorem}{Theorem} 

\newtheorem{lemma}{Lemma}
\newtheorem{proposition}{Proposition}

\theoremstyle{definition}

\newtheorem{assumption}{Assumption}

\theoremstyle{remark}
\newtheorem{remark}{Remark}

 \renewcommand{\cite}{\citet}  
\DeclareMathOperator*{\argmax}{arg\,max}

\title{Online LLM watermark detection via e-processes} 
\author{Weijie Su\thanks{Department of Statistics and Data Science, University of Pennsylvania, Philadelphia, USA.    E-mail: \href{mailto:suw@wharton.upenn.edu}{suw@wharton.upenn.edu}}\and 
    Ruodu Wang\thanks {Department of Statistics and Actuarial Science,
   University of Waterloo,
   Waterloo, Ontario, Canada.
   E-mail: \href{mailto:wang@uwaterloo.ca}{wang@uwaterloo.ca}}
  \and Zinan Zhao\thanks{Center for Data Science and School of Mathematical Sciences, Zhejiang University, Hangzhou, China. E-mail: \href{mailto:znzh@zju.edu.cn}{znzh@zju.edu.cn}}
  }
 \date{\today}

\begin{document} 
\maketitle 
\makeatletter
{
  \renewcommand{\thefootnote}{}          
  \addtocounter{footnote}{1}     
  \footnotetext{Authors are listed in alphabetical order.}
  \addtocounter{footnote}{-1}            
}
\makeatother

\begin{abstract}
Watermarking for large language models (LLMs) has emerged as an effective tool for distinguishing AI-generated text from human-written content.
Statistically, watermark schemes induce dependence between generated tokens and a pseudo-random sequence, reducing watermark detection to a hypothesis testing problem on independence.
We develop a unified framework for LLM watermark detection based on e-processes, providing  anytime-valid guarantees for online testing.  
We propose various methods to construct empirically adaptive e-processes that can enhance the detection power. 
The proposed methods are applicable to any sequential testing problem where independent pivotal statistics are available.   In addition, theoretical results are established to characterize the power properties of the proposed procedures. Some experiments demonstrate that the proposed framework achieves competitive performance compared to existing watermark detection methods.\medskip  
\\
\textbf{Keywords}:  watermark schemes, anytime validity, e-values, independence test
\end{abstract}

\section{Introduction}

The recent advent of powerful Large Language Models (LLMs) has fundamentally transformed natural language generation. State-of-the-art LLMs such as GPT-4 \citep{openai2024gpt4} and LLaMA \citep{meta2024llama3} demonstrate remarkable capabilities in producing human-like text. While these advances unlock substantial societal and economic benefits, they also raise significant ethical and security concerns. The increasing difficulty in distinguishing AI-generated text from human-written content facilitates its misuse for malicious purposes, including generation of misinformation \citep{zellers2019defending,starbird2019disinformation},
academic plagiarism \citep{stokel2022ai,milano2023edu}, and automated fraudulent content generation \citep{shumailov2023curse,das2024llm}. Such misuse may seriously erode trust in digital ecosystems. To mitigate these problems, there is a pressing need for effective mechanisms to make AI-generated text distinguishable while preserving the quality and usability of language models.

Watermarking has emerged as a promising approach to address this challenge \citep{fernandez2023wm,kuditipudi2024robust,hu2024unbiased}. In watermark schemes for LLMs, a subtle algorithmic signal is embedded into the text generation process, enabling post-hoc verification of AI authorship without significantly degrading text quality. From a statistical perspective, watermark schemes modify the sampling procedure of the language model.

Concretely, let $\mathcal{W}$ denote the vocabulary of tokens. Given a sequence of previous tokens $(w_1,\ldots,w_{t-1})$, an unwatermarked language model computes a next-token prediction (NTP) distribution $\ntp_t$ and samples $w_t$ from $\ntp_t$. In contrast, a watermarked model generates $w_t$ according to both $\ntp_t$ and a pseudo-random variable (or vector) $\zeta_t$. {In practice, the pseudo-random sequence can be reconstructed using the generated text together with shared secret information, referred to as the watermark key.} A formal description of this mechanism is deferred to Section~\ref{sec:formulation}.
Crucially, the watermark mechanism induces statistical dependence between $w_t$ and $\zeta_t$. In unwatermarked text generation, $w_t$ is independent of $\zeta_t$, whereas in watermarked generation, this independence is deliberately violated. Therefore, watermark detection can be formulated as a hypothesis testing problem on the dependence structure between the observed tokens and the pseudo-random variables \citep{aar2023,aos2025wm,li2025robust,markmyword2025}.



Despite substantial progress, existing watermark detection methods still face several challenges. First, most procedures are designed for fixed-sample testing and lack rigorous guarantees under optional stopping. This limitation is particularly problematic in real-world implementations, where LLMs generate text as a stream. In such sequential settings, repeatedly checking for evidence of the watermark using traditional p-value methods may inflate false positive rates, yet waiting for the full text to be generated introduces high computational costs and unacceptable latency for applications requiring immediate intervention. This challenge is especially critical in the emerging context of autonomous agents; as AI tools like OpenClaw automate workflows in a streamlined manner, there is a pressing need to distinguish between human and AI-generated actions in an online fashion.\footnote{ As an example of increasingly autonomous agent behavior in online environments, see the discussion at \url{https://www.reddit.com/r/singularity/comments/1r3fy5s/ai_agent_melts_down_after_github_rejection_calls/}.}
Second, statistical power of p-value methods may deteriorate when the NTP distribution becomes highly concentrated (i.e., nearly degenerate), a phenomenon that may occur in long text generation. Third, theoretical characterizations of power remain limited in general watermark detection frameworks, and existing methods often lack robustness against adversaries who might attempt to corrupt only later portions of a text.

To address these issues, we develop a unified framework for watermark detection based on e-values and e-processes, which are recently developed tools for online testing with useful features
\citep{GrunwaldHK19,  vovk2021values, wang2022false, ramdas2023game}; see
\cite{RW24} for background and a comprehensive treatment of the topic. Our framework reformulates watermark detection as sequential hypothesis testing of independence and constructs e-processes tailored to the watermark-induced structure, with a special focus on the popular Gumbel-max watermark \citep{aar2023}. The proposed approach enjoys the following key advantages:
\begin{enumerate}
    \item  The proposed testing procedures control Type I error under arbitrary stopping times, making them suitable for real-time and streaming detection scenarios.  Our method requires rather weak assumptions, and works well in the setting where a proportion of the text is human-edited.  
    \item Under mild assumptions, the proposed methodology gives rise to the only class of  admissible and unbiased e-processes, hence also the only class of  admissible and unbiased sequential tests. The connection between admissible e-processes and admissible sequential tests was established by \cite{ramdas2020admissible}.
    \item  By using  adaptive weights and the online Grenander algorithm \citep{hore2026online} to construct calibrators,  the proposed e-processes 
    are flexible and have good performance. In some situations, they can even have better performance than the existing methods that do not have sequential validity.  
    \item  We establish asymptotic power-one results under specific conditions, providing formal justification for the effectiveness of the proposed  procedures. 
\end{enumerate}

Together, these contributions provide a principled and theoretically grounded approach to watermark detection. Our framework unifies existing ideas under the lens of e-processes, clarifies the statistical structure of watermark detection, and offers both finite-sample guarantees and asymptotic efficiency. The anytime validity allows the detector to continuously monitor streaming outputs and stop as soon as sufficient evidence accumulates, thereby minimizing latency and computational cost without compromising statistical rigor. 

Although our study is motivated by the LLM watermark testing problem, the proposed methods and results are generally applicable to any online testing problems where independent pivotal statistics are sequentially available. This will be clear from our assumptions, most of which only concern the distributions of the test statistics under the null and the alternative hypothesis.

This article is organized as follows. Section \ref{sec:pre} introduces preliminary notions, including the problem setting and an overview of e-processes. In Section \ref{sec:formulation}, we rigorously formulate the watermark detection problem and derive the information bound measuring the difference between the null and alternative distributions, which applies to any watermark scheme. Building upon this foundation, Section \ref{sec:e-test} establishes a unified framework for watermark detection via e-processes with detailed construction strategies. In Section \ref{sec:guarantee}, we provide theoretical characterizations of the proposed procedures, focusing on their power properties. Empirical evaluations on both simulated data and open-source large language models are presented in Section \ref{sec:num} to demonstrate the practical advantages of our framework. Finally, Section \ref{sec:conclude} concludes the paper and discusses potential avenues for future research. The proofs of our theoretical results are provided in the Appendix.

\section{Preliminaries}
\label{sec:pre}
\subsection{Problem setting and notations}

Suppose that there are $K\ge 2$ tokens in the vocabulary $\mathcal W$ of the LLM.
We write $[n]=\{1,\dots,n\}$ for $n\in \N$ and $\mathcal W=[K]$.

 The statistical task is to test whether a given text with $T$ tokens  is generated by human, represented by a probability $\p$, or an LLM, represented by a probability $\q$. 
 As usual in online testing, the value of $T$ does not need to be known a priori, and it can be infinite.
 We set $\TP= [T]$ if $T<\infty$ and $\TP=\N$ if $T=\infty$. Moreover, $\T=\TP\cup\{0\}$.

Let $\Delta_K$ be the standard simplex in $\R^K$.  
For $\delta>0$, let $ \Delta^\delta_K$ be the set of  vectors $\bP$ in $   \Delta_K$ such that $\max (\bP)\le 1-\delta$. Implicitly we require $1-\delta\ge 1/K$; otherwise $ \Delta^\delta_K$ is empty, but this is not a problem as we are mainly interested in the case where $\delta$ is close to $0$.
Let $\Delta^\circ_K=\bigcup_{\delta>0}\Delta^\delta_K $, which is  the set of   vectors in $\Delta_K$ with at least two positive components. 
If $\ntp$ only has one positive component,  then we say that it is \emph{degenerate}.

For the $t$-th token, the NTP
is represented by a vector of probability $\ntp_t=(P_{t,w})_{w\in \mathcal W}$ in  $\Delta_K$.
We use a filtration $\mathcal F=(\mathcal F_{t})_{t\in \T}$ to represent the information available before generating each token,  and $\p_t$ and $\q_t$ are the conditional probability given  $\mathcal F_{t-1}$. We assume that the NTP vector $\ntp_t$ is $\mathcal F_{t-1}$-measurable for $t\in \TP$. 
 The vector $\mathbf P_t$ governs how the token $W_t$ is generated, that is,
 $$
\p_t(W_t=w) =\q_t(W_t=w) 
=P_{t,w} \mbox{ for $w\in \mathcal W$.}
$$

 For   $\mathbf P\in \Delta_K$, we write $\ntp=(P_1,\dots,P_K)$, and the entropy of $\mathbf P\in \Delta_K$ is defined as
 $$\ent(\ntp)=  -\sum_{w\in \mathcal W} P_w \log P_w .$$

We say that a random variable $Y$ is \emph{super-uniform} under a probability measure $\mathbb H$ if $\mathbb H(Y\le y)\le y$ 
for $y\in (0,1)$,
and it is \emph{strictly super-uniform} if $\mathbb H(Y\le y)< y$ 
for $y\in (0,1)$.

Throughout the paper, terms like ``increasing'' and ``decreasing'' are in the weak sense, and we follow the convention   $\sum_{\varnothing}=0$
and $\prod_{\varnothing}=1$.

\subsection{E-values and e-processes}
\label{sec:22}

For a null hypothesis, represented by a set of probabilities (a generic member is denoted by $\p$)  and a filtration $\mathcal F$, 
an e-variable is a nonnegative random variable $E$ with $\E^\p[E]\le 1$ (for all $\p$ in the null hypothesis), and 
an \emph{e-process} is a nonnegative stochastic process $(M_t)_{t \in \T}$  adapted to $\mathcal{F}$  such that $\mathbb{E}^\p [M_\tau] \leq 1$ for any stopping time $\tau$ with respect to $\mathcal{F}$. This means one can monitor evidence continuously without inflating error rates.

For an e-process $M$ with infinite time horizon, its asymptotic growth rate is defined as
the limit 
$$\liminf_{T\to\infty} \frac 1T \log M_T,$$
under a specific probability measure $\q$ in the alternative hypothesis.

Once an e-process $M$ is constructed, we can sequentially test  $\p$ versus $\q$ in the following way: 
reject the null hypothesis 
whenever the process $M$ reaches a threshold $1/\alpha$, where $\alpha \in(0,1)$ is a pre-specified level.
Ville's inequality (which is equivalent to the optional stopping theorem) gives
$$
\p\left(\sup_{t\in \T} M_t \ge \frac{1}{\beta }\right) \le \beta \mbox{ for all $\beta > 0 $}.
$$
Therefore, the type-I error is controlled, even if $T=\infty$, and we can use any stopping rule to terminate the process (e.g., stop after 1000 tokens, or stop after seeing $M$ drop below $0.1$).

 Nonnegative random variables $E_t$, $t\in \TP$, are \emph{sequential e-variables} if they satisfy
 $\E^\p[E_t|\mathcal F_{t-1}]\le 1$ for $t\in \TP$.
 An e-process $M$ is often constructed from the running product of sequential e-variables, that is,  
 $$
 M_t=\prod_{s=1}^t E_s,~~~t\in \T.
 $$
Note that 
  $ \log M_t  $ is the sum of $t$   terms $\log E_1,\dots,\log E_t$, 
and therefore, intuitively, $\log M_t$ would be large for large $t$ if the above terms  have positive means.
Our main task is to construct the e-values $(E_t)_{t\in \TP}$  such that the above intuition works rigorously. 

The use of e-values and e-processes in the setting of watermark detection naturally analyzes tokens one by one. That is, each e-value $E_t$ is constructed by one token at step $t$, so that they are actually independent. Then, we combine them into an e-process and  test  the hypothesis via Ville's inequality  as described above. 
For this reason, in the next section we first focus on the analysis of one token.

\section{A mathematical formulation of watermarks}
\label{sec:formulation}

\subsection{Watermarks and the detection problem}
For now, we focus on just one observation $(W,\zeta)$, where $\zeta$ is the pseudo-random vector that generates the next token, taking value in a set $Z$, and $W$ is the token that is chosen.
This  one-observation  setting is  without loss of generality  in constructing e-processes, as we will see later.

  The exact distribution of $\zeta$ does not matter, and we can without loss of generality assume that $\zeta$ is uniformly distributed on $Z=[0,1]^K$. Here $K$ is chosen as the size of the vocabulary $\mathcal W$, but again this choice is not important.

An (unbiased) watermarking scheme is a function
$S:  \Delta_K\times  Z
\to [K]
$
satisfying
\begin{align}
\label{eq:cond}
\q(S(\ntp, \zeta)=w)=P_w ,~~~\mbox{ for all } \ntp \in \Delta_K,~w\in \mathcal W,
\end{align}
and $\q$ is an objective data generating probability \citep{aos2025wm}. 
The condition in \eqref{eq:cond} can be rewritten as $S(\ntp, \zeta)\lawis \mathrm{Cat}(\ntp)$ for each $\ntp\in \Delta_K$, where $\mathrm{Cat} (\ntp)$ is the categorical distribution with probability vector $\ntp$.
There is no other requirement on $S$.

For one observed token $W\lawis \mathrm{Cat}(\ntp) $ and $\zeta$, the null hypothesis $H_0$
and the alternative hypothesis $H_1$ are given by 
$$
H_0: (W,\zeta) \mbox{ is  
independent; ~~~~ $H_1: W= S(\ntp,\zeta)$}. 
$$ 
For a given $\ntp$, 
we use  $\uu^\ntp $
for the distribution of  $(W,\zeta)$ under $H_0$
and 
$\q^{\ntp}$ for the distribution of  $(W,\zeta)$  under  $H_1$, to emphasize  that both $\uu^\ntp$ and $\q^\ntp$ depend on the usually unobservable $\ntp$. This makes the testing problem 
one with a
composite null and  a composite alternative. 
Moreover, $\q^\ntp$ depends on the watermarking scheme $S$ through  the condition \eqref{eq:cond} on $\q^\ntp$. 

\subsection{The Gumbel-max watermark}
\label{sec:Gumbel}

We next describe a primary example, the Gumbel-max watermark,  proposed by \citet{aar2023}. Our methodology does 
not rely on this specific watermark design, but it does not hurt to keep this example in mind for intuition.

As one of the most influential watermark schemes, it has been implemented internally at OpenAI \citep{openai2024understanding} and has become a baseline in many studies \citep{aos2025wm,li2025robust}. This watermark utilizes the Gumbel-max trick \citep{gumbel1954statistical,jang2017categorical} to ensure the unbiasedness characterized in \eqref{eq:cond}. The Gumbel-max watermark operates with the pseudo-random vector $\zeta = (U_{w})_{w \in \mathcal{W}}$ consisting of iid standard uniform variables, and the decoder is given by
\begin{equation}\label{eq:it}
S(\bP, \zeta) = \argmax_{w \in \mathcal{W}} \frac{\log U_w}{P_w}.
\end{equation}
As suggested by \citet{aar2023},
a pivotal statistic for one token is 
 $
 Y=U_{W}, 
 $
which is uniformly distributed on $[0,1]$ under the null hypothesis. Therefore, the distribution of $Y$ under $\uu^\ntp$ does not depend on $\ntp$, although the distribution of $(W,\zeta)$ depends on $\ntp$. Under the alternative hypothesis $\q^{\ntp}$, the distribution of $Y$   depends on $\ntp$, and it is  given by 
\begin{equation}
\label{eq:FPY}
    F^\ntp(y)= \q^\ntp (Y\le y)= \sum_{w\in \mathcal W} P_{w} y^{1/P_{w}} \mbox{~~~for $y\in [0,1]$, }
\end{equation} 
with the convention $y^{\infty}=0$ for $y<1$ and $1^\infty=1$.
Its density function is 
\begin{equation}
\label{eq:FPY-den}
 \frac{\d F^{\ntp}}{\d y}(y) = \sum_{w\in \mathcal W} y^{1/P_{w}-1}, ~~~~y\in [0,1], 
\end{equation}
 which is an increasing function of $y$.
 Moreover, it is strictly increasing on $[0,1]$ unless $\ntp$ is degenerate.   
 As a consequence, $Y$ is strictly super-uniform under the alternative hypothesis for any nondegenerate $\ntp $. Super-uniformity is a key property for our test based on e-processes, and motivates our main assumption in Section \ref{sec:e-test}.

Putting the subscript $t$ back to the original detection problem with $T<\infty$ tokens,  we denote by $Y_t=U_{t,W_t}$ the pivotal statistic for the $t$-th token $W_t$. \citet{aar2023} proposed to declare the presence of the watermark if the
sum-based statistic $H_{\mathrm{ars}} = \sum_{t=1}^T h_{\mathrm{ars}}(Y_t)$, with the score function $h_{\mathrm{ars}}(y) = -\log(1-y)$, is above a certain threshold. 
Other score functions, such as the log function $h_{\mathrm{log}}(y) = \log y$ \citep{fernandez2023wm,kuditipudi2024robust}, and the optimized score function $h_{\mathrm{gum}}^*$ analyzed in \cite{aos2025wm}, can also be adopted in the sum-based approach.

 The following lemma gives some   properties of  $F^\ntp(y)$ as a function of $\ntp$. 
For two vectors $x$ and $y$ in $\R^n$, we say that $y$ \emph{majorizes} $x$, denoted by $x \preccurlyeq y$, 
if $\sum_{i=1}^k x_{[i]}\le\sum_{i=1}^k y_{[i]}$ for all $k\in [n]$ and $\sum_{i=1}^n x_{[i]}=\sum_{i=1}^n y_{[i]}$, where $x_{[i]}$ is the $i$-th largest component of $x$ (and analogously for $y$).
For a set $A\subseteq \R^n$, a function $f: A\to \R$ is \emph{Schur-convex} if 
 $$
 x\preccurlyeq y ~\Longrightarrow~ f(x) \le f(y)
 $$
 for $x,y\in A$. See \cite{marshall2011inequalities} for a treatment of majorization and Schur-convexity. 
 
\begin{lemma}
\label{lem:gm}
Let $F^\ntp$ be given in \eqref{eq:FPY} and $\delta\in (0,1/2]$. 
\begin{enumerate}[(i)]
    \item    For $y\in [0,1]$, the function 
     $\ntp \to F^\ntp(y)$ is Schur-convex on $\Delta_K$. 
    \item   For  $y\in [0,1]$, the vector $\ntp^*=(1-\delta,\delta,0,\dots,0)\in \Delta_K^\delta$ 
     maximizes $F^{\ntp}(y)$ over $\ntp\in \Delta_K^\delta$. 
    \item For any increasing function $f:[0,1]\to \R$, we have 
     $$
    \min_{\ntp \in \Delta_K^\delta} \E^{\q^{\ntp}} [f(Y) ]
 = 
    \E^{\q^{\ntp^*}} [f(Y) ].
     $$
\end{enumerate}
\end{lemma}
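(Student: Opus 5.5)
For part (i), I will write $F^\ntp(y)=\sum_{w\in\mathcal W} g_y(P_w)$ with $g_y(p)=p\,y^{1/p}$ for $p\in(0,1]$ and $g_y(0)=0$. This is a symmetric separable sum, so by the standard result in \cite{marshall2011inequalities}, Schur-convexity on $\Delta_K$ follows once $g_y$ is convex on $[0,1]$. The boundary cases $y\in\{0,1\}$ are trivial: $g_0\equiv 0$ on $(0,1]$, and $g_1(p)=p$.

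For $y\in(0,1)$, set $a=-\log y>0$, so $g_y(p)=p e^{-a/p}$. Differentiating gives
$$g'(p)=e^{-a/p}(1+a/p) \qquad\text{and}\qquad g''(p)=e^{-a/p}\frac{a^2}{p^3}\ge 0,$$
so $g_y$ is convex on $(0,1]$. It is also continuous at $0$ with $g_y(0)=0$, hence convex on $[0,1]$. This step is routine.

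For part (ii), I will show that $\ntp^*$ majorizes every $\ntp\in\Delta_K^\delta$, and then (i) gives the claim. Take $\ntp\in\Delta_K^\delta$ and sort its components, $P_{[1]}\le 1-\delta=P^*_{[1]}$. For $k\ge 2$ the partial sums of $\ntp^*$ equal $1$, which dominates any partial sum of $\ntp$. Both vectors sum to $1$. Hence $\ntp\preccurlyeq\ntp^*$, so $F^\ntp(y)\le F^{\ntp^*}(y)$ for every $y$. The hypothesis $\delta\le 1/2$ guarantees $1-\delta\ge\delta$, so that $1-\delta$ is indeed the largest component of $\ntp^*$, and also that $\ntp^*\in\Delta_K^\delta$.

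For part (iii), (ii) says that under $\q^{\ntp^*}$ the variable $Y$ is stochastically smallest among $\{\q^\ntp:\ntp\in\Delta_K^\delta\}$, since $F^{\ntp^*}\ge F^\ntp$ pointwise. For increasing $f$, the standard characterization of first-order stochastic dominance gives $\E^{\q^\ntp}[f(Y)]\ge\E^{\q^{\ntp^*}}[f(Y)]$. Concretely, I would couple through quantile functions: $(F^{\ntp})^{-1}(V)\ge (F^{\ntp^*})^{-1}(V)$ for $V$ uniform, and then apply monotonicity of $f$. The minimum is attained at $\ntp^*\in\Delta_K^\delta$.

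The main point requiring care is integrability when $f$ is unbounded. Since $f$ is increasing on $[0,1]$, it is bounded above by $f(1)$, so the expectations are well defined in $[-\infty,f(1)]$ and the quantile coupling inequality still holds. I would remark this rather than expect a real obstacle. The only genuinely substantive step is the convexity computation in (i), and it is clean.
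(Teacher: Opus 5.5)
Your proposal is correct and follows the paper's proof: convexity of $p\mapsto p\,y^{1/p}$ combined with the Marshall--Olkin separable-sum criterion for (i), the majorization $\ntp\preccurlyeq\ntp^*$ for (ii), and first-order stochastic dominance for (iii), with your explicit computation $g''(p)=e^{-a/p}a^2/p^3$ supplying the paper's ``straightforward calculus.'' One small correction: an increasing real-valued $f$ on $[0,1]$ satisfies $f(0)\le f\le f(1)$, so it is bounded, and the integrability concern you raise in (iii), including the possible value $-\infty$, cannot occur.
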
 
The properties in Lemma \ref{lem:gm} will be useful for our later analysis specific to the Gumbel-max watermark.

\subsection{Discussion on optimal e-values for watermarks}

We discuss how to construct an e-value based on one data point $(W,\zeta)$ and obtain some information bounds in this static setting, without specializing to any specific watermarking scheme. The main message is that standard approaches for optimal e-value, such as those described by \cite{GrunwaldHK19} and \cite{RW24}, depend crucially on the unknown NTP vector $\ntp$ and thus the corresponding tests are not suitable in practice. Nevertheless, the discussion provides some intuition for designing tests in Section \ref{sec:e-test}.

For the static setting, an e-variable is a statistic $E:\mathcal W\times Z\to \R_+$ satisfying the constraint $\E^\p[E(W,\zeta)]\le 1$. We write $E=E(W,\zeta)$ for simplicity. 
For a given $\ntp\in \Delta_K$, the first task is to consider

{
\begin{equation}
    \label{eq:opt-e} \begin{aligned}
\mbox{ maximize }  &  \E^{\q^{\ntp}}[\log E]
\\ \mbox{ over } &  \mbox{ statistic } E\ge 0; \mbox{ watermarking scheme } S
\\
\mbox{ subject to } & \E^{\uu^{\mathbf  R}}[E]\le 1  \mbox{ for all $\mathbf R\in \Delta_K $}.   
\end{aligned}
\end{equation}  
}

The quantity  $\E^{\q^{\ntp}}[\log E]$ is known as the e-power of $E$ against $\q^{\ntp}$.
There are two challenges associated with \eqref{eq:opt-e}, making it impossible to directly solve  in practice. First, The NTP vector $\mathbf P$ 
    is usually unavailable to the tester in a typical modern LLM. Second, the optimization problem \eqref{eq:opt-e} is very complicated, and we are not aware of any methods to solve it, due to the complexity caused by the watermarking scheme $S$.  

A natural upper bound on the e-power in \eqref{eq:opt-e}
is the Kullback--Leibler (KL) divergence of $\q^\ntp$
from $\uu^\ntp$, which measures the difference between the two measures. 
The KL divergence of a probability measure $\q_1$
from another probability $\q_2$ is defined  as \begin{align*}
\mathrm{KL}(\q_1 \Vert \q_2) 
&= \E^{\q_1 }\left[ 
\log\frac{\d \q_1 } {\d  \q_2} 
\right]  \mbox{~~~ if $\q_1  \ll \q_2 $,}
\end{align*}
 and it is $\infty$ otherwise. 
By \citet[Theorem 3.20]{RW24}, 
$$
\E^{\q^{\ntp}} [\log E] \le \mathrm{KL}(\q^{\ntp}\Vert \uu^\ntp) 
$$ for any e-variable $E$, 
and therefore $\mathrm{KL}(\q^{\ntp}\Vert \uu^\ntp) $ is an upper bound for the e-power in \eqref{eq:opt-e}. 
 The next proposition shows that $\mathrm{KL}(\q^{\ntp}\Vert \uu^\ntp) $ does not depend on the watermarking scheme.
\begin{proposition}
\label{prop:KLent}
For any watermarking scheme $S$ and  NTP vector $\ntp\in \Delta_K$, 
$$
\mathrm{KL}(\q^{\ntp}\Vert \uu^\ntp) = \ent(\ntp)
$$
always holds true. In particular, $\q^\ntp\ll \uu^\ntp$. 
\end{proposition}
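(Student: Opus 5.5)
We compute the Radon--Nikodym derivative of $\q^\ntp$ with respect to $\uu^\ntp$ directly on the product space $\mathcal W\times Z$.

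Under $\uu^\ntp$, the pair $(W,\zeta)$ has law $\mathrm{Cat}(\ntp)\otimes \mathrm{Unif}(Z)$. Under $\q^\ntp$, $\zeta\sim\mathrm{Unif}(Z)$ and $W=S(\ntp,\zeta)$ is a deterministic function of $\zeta$. For a measurable set $A\subseteq \mathcal W\times Z$, write $A_w=\{z:(w,z)\in A\}$. Then
\begin{equation*}
\q^\ntp(A)=\sum_{w\in\mathcal W}\lambda\bigl(A_w\cap\{z:S(\ntp,z)=w\}\bigr),
\end{equation*}
where $\lambda$ is Lebesgue (uniform) measure on $Z$. Similarly,
\begin{equation*}
\uu^\ntp(A)=\sum_{w}P_w\,\lambda(A_w).
\end{equation*}

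Only tokens with $P_w>0$ matter. Indeed, by \eqref{eq:cond}, $\lambda(\{z:S(\ntp,z)=w\})=P_w$, so if $P_w=0$ the set $\{S=w\}$ is $\lambda$-null and contributes nothing to $\q^\ntp$. It follows that
\begin{equation*}
\frac{\d\q^\ntp}{\d\uu^\ntp}(w,z)=\frac{\id\{S(\ntp,z)=w\}}{P_w}\quad\text{for } P_w>0,
\end{equation*}
and the density may be set to $0$ when $P_w=0$. To verify this, integrate: $\sum_{w:P_w>0}P_w\int_{A_w}\id\{S(\ntp,z)=w\}/P_w\,\d z$ recovers $\q^\ntp(A)$. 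This also proves absolute continuity $\q^\ntp\ll\uu^\ntp$, because any $\uu^\ntp$-null set has $\lambda(A_w)=0$ whenever $P_w>0$.

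Now compute the divergence. On the $\q^\ntp$-almost sure event $\{W=S(\ntp,\zeta)\}$, the likelihood ratio equals $1/P_W$. Hence
\begin{equation*}
\mathrm{KL}(\q^\ntp\Vert\uu^\ntp)=\E^{\q^\ntp}[-\log P_W]=-\sum_w P_w\log P_w=\ent(\ntp),
\end{equation*}
using once more that $W\sim\mathrm{Cat}(\ntp)$ under $\q^\ntp$ by \eqref{eq:cond}. Nothing beyond \eqref{eq:cond} is used, so the value does not depend on $S$.

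The only delicate points are measure-theoretic. First, the zero-probability tokens must be handled; this is done through the null-set argument above. Second, one should confirm that $S(\ntp,\cdot)$ is measurable, which is implicit in \eqref{eq:cond} being well defined. The degenerate case needs no separate treatment: it simply gives $\ent(\ntp)=0$, consistent with $\q^\ntp=\uu^\ntp$.
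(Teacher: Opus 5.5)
Your proof is correct and follows essentially the same route as the paper. The paper conditions on $W=w$ and notes that $\zeta$ is uniform on $\{z: S(\ntp,z)=w\}$ under $\q^\ntp$ but uniform on $Z$ under $\uu^\ntp$, so the likelihood ratio is $1/P_w$; this is exactly your explicit density $\id\{S(\ntp,z)=w\}/P_w$ on the product space, which you additionally verify by integration, also disposing of tokens with $P_w=0$, a point the paper passes over.
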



In order for 
$ \E^{\q^{\ntp}} [\log E] =  \mathrm{KL}(\q^{\ntp}\Vert \uu^\ntp), $ 
it is necessary and sufficient that 
$E$ is given by 
$E= \d \q ^\ntp/ \d \uu^\ntp $ (\citealp[Theorem 3.20]{RW24}). However, this is usually not a feasible choice of an e-variable, because the constraint  in  \eqref{eq:opt-e} 
is guaranteed to satisfy only for the specific $\ntp$. 
A strategy to get $E$ satisfying  the constraint is 
to select a pivotal statistic, as in the case of the Gumbel-max watermark in Section \ref{sec:Gumbel}. For this choice, the maximum e-power for an e-variable is given by
$$\mathrm{KL}(F^\ntp\Vert \mathrm{U})= \sum_{w\in \mathcal W}  \int_0^1 y^{1/P_{w}-1} \log \left(\sum_{w\in \mathcal W}   y^{1/P_{w}-1} \right)\d y \le \ent(\ntp) ,$$
where $\mathrm U$ is the uniform distribution on $[0,1]$. 
Since the mapping from $\zeta$ to $Y$ is not injective, typically
$\mathrm{KL}(F^\ntp\Vert \mathrm{U})< \ent(\ntp)$ holds.  Hence, by using the Gumbel-max watermark,  the e-power has to be reduced, even if $\ntp$ were known. 
In the next section, we take the approach of   pivotal statistics, and design e-values that do not depend on $\ntp$.


\section{E-processes for  watermarks}
\label{sec:e-test}

Throughout this section, we let
 $Y_t,~t\in \TP$ be  the pivotal statistics for  a  watermark scheme.  
Recall that
$\mathcal F=(\mathcal F_t)_{t\in \T}$ is the information filtration, to which $(Y_t)_{t\in \TP}$ is adapted, with the interpretation that
 $\mathcal{F}_{t-1}$  contains all the statistical information available before generating the token $w_t$.  
 The sequence $(Y_t)_{t\in \TP}$ satisfies two basic conditions.
 
\begin{assumption}
     \label{assm:Y}
The distributions of $Y_t$ satisfy that:
{
\begin{enumerate}[(i)]
\item under the null hypothesis, $Y_t$ is uniformly distributed on $[0,1]$ and independent of $\mathcal F_{t-1}$;
\item under the alternative hypothesis, $Y_t$ is super-uniform conditional on $\mathcal F_{t-1}$.
\end{enumerate}}
\end{assumption}

 The uniform distribution in Assumption \ref{assm:Y}  is not essential, because 
continuously distributed pivotal statistic can be converted to a uniformly distributed one.
Condition (ii) does not matter for the validity (type-I error control) of our method, but it affects the power of the method.

Assumption \ref{assm:Y} 
allows for the case where some of the texts are generated by human edits under the alternative hypothesis, a setting studied by \cite{li2025robust}.  In this setting, some of $Y_t$ (LLM generated) are super-uniform (and not uniform), 
whereas  the others (human generated or edited) are uniform; see the experiments in Section \ref{sec:num}.

In the simple case where
$\mathcal F_t=\sigma\{Y_1,\dots,Y_t\}$ and $Y_t$,  $t\in \TP$ are independent,
(i) means that  $Y_t,~ t\in \TP$ are iid uniformly distributed on $[0,1]$.
 The special case of the Gumbel-max watermark is explained in Section \ref{sec:Gumbel}, where 
$Y_t=U_{t,w_t}$ for $t\in \TP$  and  they satisfy Assumption \ref{assm:Y}.

\subsection{Online test with e-processes}

As explained in Section \ref{sec:formulation}, there is no feasible way to directly construct e-values and e-processes to optimize their power  against the alternative hypothesis without knowing the NTP. Here we present a methodology, which, although not optimal in the sense of \eqref{eq:opt-e}, essentially covers  all useful e-processes for the testing problem in Assumption \ref{assm:Y}. 

We first introduce (p-to-e)  calibrators  \citep{vovk2021values}, which convert p-values into e-values. A \textit{calibrator}
is a decreasing function $f:[0,1]\to\R_+$ satisfying 
$\int_0^1 f(p)\d p=1$.
Although in some literature only $\int_0^1 f(p)\d p\le 1$ is required, it is without loss of generality to   consider only those  with $
\int_0^1 f(p)\d p=  1
$ because a large e-value is better for the test.

To use calibrators in an online testing framework, we need random calibrators. 
 For $t\in \TP$,
 let $\Phi_t$ be the set of all random functions $f_t:[0,1]\to \R_+$ such that 
\begin{enumerate}[(a)]
    \item $f_t$ is determined by $Y_1,\dots,Y_{t-1}$ (that is, $f_t$ is predictable);
    \item  $f_t$ is a calibrator for all realizations of $Y_1,\dots,Y_{t-1}$.
\end{enumerate}
In other words, $\Phi_t$ is the set of  all predictable calibrators at time $t$.   
We consider the e-process $M=(M_t)_{t\in \T}$ defined by  
\begin{equation}
    \label{eq:M}M_t = \prod_{s=1}^t   E_s, ~~~\mbox{ for } t\in \T,
\end{equation}
where each $E_t=f_t(1-Y_t)$ for some  $f_t \in \Phi_t$.

Adapting the standard terminology in hypothesis testing, 
we say that an e-process $M$
is \emph{unbiased} (against the alternative hypothesis) 
if for all probability measures $\q$ in the alternative hypothesis,  $\E^\q[M_t]\ge 1$
for all $t\in \TP$. 
In other words, its mean is at most $1$ under the null hypothesis,
and at least $1$ under the alternative hypothesis.


The next  result, which is the most important technical tool for constructing our tests,   follows from standard arguments in the e-value literature.  
 \begin{theorem}
 \label{th:validity}
 Under Assumption \ref{assm:Y}, the process $M=(M_t)_{t\in \T}$  in \eqref{eq:M}  with $$
E_t=f_t(1-Y_t),~~~t\in \TP$$  
is an unbiased e-process (also a martingale), where $f_t\in \Phi_t$ for each $t$.
In particular, under the null hypothesis, 
$$
\p\left(\sup_{t\in \T} M_t\ge 1/\alpha\right) \le \alpha 
~~~\mbox{ for all $\alpha \in (0,1)$.}
$$
 \end{theorem}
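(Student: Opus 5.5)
The plan is to verify the martingale property under the null hypothesis and the submartingale-type property (mean at least one) under the alternative by conditioning on $\mathcal F_{t-1}$ one step at a time, and then to invoke Ville's inequality.

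\emph{Null hypothesis.} Fix $t\in\TP$. Since $f_t\in\Phi_t$ is determined by $Y_1,\dots,Y_{t-1}$, it is $\mathcal F_{t-1}$-measurable; here we use that $(Y_s)$ is adapted to $\mathcal F$. By Assumption~\ref{assm:Y}(i), $Y_t$ is uniform on $[0,1]$ and independent of $\mathcal F_{t-1}$, so $1-Y_t$ is also uniform and independent of $\mathcal F_{t-1}$. Freezing the realization of $f_t$ (formally, a conditional-expectation lemma for independent variables), we get
\[
\E^\p[E_t\mid\mathcal F_{t-1}]=\int_0^1 f_t(p)\,\d p=1 .
\]
Since $M_{t-1}$ is $\mathcal F_{t-1}$-measurable and nonnegative, $\E^\p[M_t\mid\mathcal F_{t-1}]=M_{t-1}$, so $M$ is a nonnegative martingale with $M_0=1$. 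For bounded stopping times, the optional stopping theorem gives $\E^\p[M_\tau]=1$. For general (possibly infinite) $\tau$, define $M_\tau$ as $\liminf_t M_{t\wedge\tau}$ on $\{\tau=\infty\}$ (the limit exists by martingale convergence); Fatou's lemma then gives $\E^\p[M_\tau]\le 1$. Hence $M$ is an e-process, and Ville's inequality for nonnegative supermartingales with $M_0=1$ yields $\p(\sup_t M_t\ge 1/\alpha)\le\alpha$.

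\emph{Unbiasedness under the alternative.} By the tower property it suffices to show $\E^\q[E_t\mid\mathcal F_{t-1}]\ge 1$; then $\E^\q[M_t]\ge \E^\q[M_{t-1}]\ge\dots\ge 1$ by induction. Conditionally on $\mathcal F_{t-1}$, $Y_t$ is super-uniform by Assumption~\ref{assm:Y}(ii), so $X=1-Y_t$ satisfies $\q_t(X\ge x)\le x$ for $x\in(0,1)$. Equivalently, $X$ is stochastically dominated from above by a uniform variable, i.e.\ $X\le_{\rm st} U$ in the sense that $\q_t(X> x)\le 1-x$... wait, the direction needs care. Super-uniformity of $Y_t$ means $\q_t(Y_t\le y)\le y$, so $Y_t$ is stochastically larger than $U$. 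Therefore $1-Y_t$ is stochastically smaller than $U$. Since $f_t$ is decreasing and $\mathcal F_{t-1}$-measurable, this gives
\[
\E^\q[f_t(1-Y_t)\mid\mathcal F_{t-1}]\ge \int_0^1 f_t(p)\,\d p=1 .
\]
To make this rigorous without worrying about integrability, write $f_t(p)=\int_0^\infty \id\{f_t(p)>s\}\,\d s$. For a decreasing $f_t$, the set $\{p: f_t(p)>s\}$ is an interval of the form $[0,c_s)$ or $[0,c_s]$. Super-uniformity gives $\q_t(1-Y_t<c)=\q_t(Y_t>1-c)\ge c$, and likewise for the closed-interval version. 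Applying Tonelli's theorem then yields the inequality.

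The only mildly delicate points are the conditional-expectation step with a random (predictable) function $f_t$, which needs a regular conditional distribution or the independence lemma, and handling stopping times that may be infinite. Neither is a real obstacle.
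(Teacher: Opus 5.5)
Your proof is correct and follows the paper's argument. Under the null, the conditional mean of $E_t$ is $\int_0^1 f_t=1$, which gives a nonnegative martingale and Ville's inequality; under the alternative it is at least $1$ because $1-Y_t$ is stochastically smaller than a uniform and $f_t$ is decreasing, and your Fatou step for infinite stopping times and the layer-cake/Tonelli justification only make explicit what the paper leaves implicit. Delete the false first claim $\q_t(1-Y_t\ge x)\le x$ (the correct bound is $1-x$) together with the ``wait'' aside, and keep only the corrected direction.
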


The next result gives a converse statement to Theorem \ref{th:validity}, showing that \eqref{eq:M} is the only possible form for e-processes when $\mathcal F$ is the natural filtration of  $(Y_t)_{t\in \TP}$. 
An e-process $M$ is \emph{admissible} if there is no other e-process $M'$ such that $M'\ge M$ with $\p(M'_t>M_t)>0$ for some $t\in \TP$.
\begin{theorem}\label{th:revese}
Suppose that $\mathcal F_t=\sigma(Y_1,\dots,Y_t)$ for $t\in \TP$ and Assumption \ref{assm:Y}  holds.  
Then $M$ is an admissible and unbiased e-process    if and only if
\begin{equation}\label{eq:dominance}
M_t=  \prod_{s=1}^t f_s(1-Y_s),~~~t\in \T,~ \mbox{~$\p$-almost surely}.
\end{equation} for some $f_t \in \Phi_t$ for all $t\in\TP$.
\end{theorem}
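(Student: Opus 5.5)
For the ``if'' direction, Theorem~\ref{th:validity} already shows that the process in \eqref{eq:dominance} is an unbiased e-process. It remains to prove admissibility. Under the null, $M_t$ is a martingale with $\E^\p[M_t]=1$ for every $t$, because each $f_t$ integrates to exactly $1$ and $1-Y_t$ is uniform and independent of $\mathcal F_{t-1}$. Suppose $M'\ge M$ is an e-process with $\p(M'_t>M_t)>0$ for some $t$. Then $\E^\p[M'_t]>\E^\p[M_t]=1$. Since a deterministic time $t$ is a stopping time, this contradicts the e-process property of $M'$. So $M$ is admissible.

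For the ``only if'' direction, let $M$ be admissible and unbiased. I would proceed in four steps.

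\emph{Step 1.} Show that $M$ can be assumed to be a supermartingale. With the natural filtration and iid uniform $Y_t$ under the simple null, the null is a single measure $\p$. Ville/Ramdas-type results \citep{ramdas2020admissible} then say that every e-process for a simple null is dominated by a nonnegative supermartingale (its Snell envelope) with initial value at most $1$. Admissibility forces $M$ to equal this dominating process almost surely.

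\emph{Step 2.} Upgrade the supermartingale to a martingale with $M_0=1$. If $\E^\p[M_t\mid\mathcal F_{t-1}]<M_{t-1}$ on a set of positive probability, define $M'$ by rescaling the $t$-th conditional increment so that its conditional mean equals $M_{t-1}$. Then $M'\ge M$ with strict inequality on a positive-probability event, which contradicts admissibility. Similarly, $M_0=1$ is forced, because otherwise we could scale $M$ up.

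\emph{Step 3.} Factorise into predictable calibrators. On $\{M_{t-1}>0\}$, set $g_t(u)=M_t/M_{t-1}$ viewed as a function of $u=1-Y_t$ given $Y_1,\dots,Y_{t-1}$. This is a measurable function of the history and of $u$, and it is nonnegative with $\int_0^1 g_t(u)\,\d u=1$. On $\{M_{t-1}=0\}$, take any calibrator. So $M_t=\prod_s g_s(1-Y_s)$, and the only missing property is monotonicity: each $g_t$ must be decreasing.

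\emph{Step 4, the main obstacle.} Show that unbiasedness forces $g_t$ to be decreasing almost everywhere. The plan is a contrapositive argument. If $g_t$ is not a.e. decreasing on a history set of positive probability, I would build an alternative $\q$ that violates $\E^\q[M_t]\ge1$:
\begin{itemize}
\item up to time $t-1$, take $Y_s$ uniform under $\q$, so $\E^\q[M_{t-1}]=1$ and the relevant history set still has positive probability under $\q$;
\item at time $t$, on that history set, choose a super-uniform conditional law of $Y_t$ that concentrates mass where $g_t(1-y)$ is small, while leaving $Y_t$ uniform elsewhere. Note that the distribution of $1-Y_t$ is then sub-uniform, stochastically larger than uniform.
\end{itemize}
The key fact I need is this: $\int g\,\d G\ge\int g\,\d u$ for all distributions $G$ stochastically larger than uniform if and only if $g$ is decreasing a.e. It follows from a layer-cake or rearrangement argument, or from a two-point mass swap. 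If $g$ increases on a pair of small intervals $A<B$, moving uniform mass from $A$ to $B$ keeps the law super-uniform in $Y$ but raises the mass of $1-Y$ where $g$ is smaller, so the expectation drops below $1$. Conditioning on $\mathcal F_{t-1}$ then gives $\E^\q[M_t]<1$. The delicate part is making this measurable in the history and ensuring it fits the alternative class, which is all super-uniform conditionals allowed by Assumption~\ref{assm:Y}(ii). Finally, I would modify $g_t$ on null sets to obtain a genuine calibrator $f_t\in\Phi_t$, which establishes \eqref{eq:dominance} $\p$-a.s.
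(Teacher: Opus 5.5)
Your route is essentially the paper's. Admissibility makes $M$ a $\p$-martingale with $M_0=1$; the paper just cites \citet[Theorem 18]{ramdas2020admissible}, and your Steps 1--2 re-derive this. The ratios $M_t/M_{t-1}$ then give predictable densities in $1-Y_t$, and unbiasedness forces monotonicity through a super-uniform perturbation of $Y_t$ on the bad history set. One small fix in Step 2: multiplicative rescaling fails where $\E^\p[M_t\mid\mathcal F_{t-1}]=0<M_{t-1}$. Adding the $\mathcal F_{t-1}$-measurable deficit $M_{t-1}-\E^\p[M_t\mid\mathcal F_{t-1}]$ to every $M_s$, $s\ge t$, always works. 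Your admissibility argument for the ``if'' direction, via $\E^\p[M_t]=1$, supplies a step the paper leaves implicit.

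Step 4, however, has the stochastic order reversed. If $Y_t$ is super-uniform, then $\q(1-Y_t\le u)\ge u$, so $1-Y_t$ is stochastically \emph{smaller} than uniform. The fact you need is that $\int g\,\d G\ge\int_0^1 g(u)\,\d u$ for all $G$ stochastically smaller than uniform iff $g$ is a.e.\ decreasing. With ``larger'' the statement is false (take $g(u)=2(1-u)$ and $G=\delta_1$), and a $G$ built from it would make $Y_t$ sub-uniform, outside the alternative.

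So mass of $1-Y_t$ must be pushed \emph{down}. If $g_t$ is not a.e.\ decreasing, there are intervals $I=[a,a+h]$ and $J=[b,b+h]$ with $a+h\le b$ and $\int_I g_t<\int_J g_t$; otherwise $u\mapsto\int_0^u g_t$ would be concave, making $g_t$ a.e.\ decreasing. Put $1-Y_t=U-(b-a)$ on $\{U\in J\}$ and $1-Y_t=U$ otherwise. Then $Y_t\ge 1-U$ is super-uniform, and $\E^\q[g_t(1-Y_t)\mid\mathcal F_{t-1}]=1-\int_J g_t+\int_I g_t<1$ on the bad set.

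Restricting to rational $(a,b,h)$ and taking the first admissible triple in a fixed enumeration makes this choice measurable in $(Y_1,\dots,Y_{t-1})$. That settles the point you flag, which the paper also glosses over. Moreover, this $\q$ has density at most $2$ with respect to $\p$ on $\mathcal F_t$, so your $\p$-a.s.\ identities from Steps 1--3 transfer to $\q$. The paper's $h^{\mathbf y}$ instead collapses $B^{\mathbf y}$ onto a single point and creates an atom.

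This matters for the ``if'' direction too. Both you and the paper apply Theorem~\ref{th:validity} to a process that equals the product only $\p$-a.s. That is fine for alternatives $\q\ll\p$, but it fails for atomic super-uniform laws: take $Y_1=1$ under $\q$, with $M_1$ set to $0$ on $\{Y_1=1\}$. Your construction remains valid if the alternative class is restricted to $\q\ll\p$, whereas the paper's atom-creating $h^{\mathbf y}$ does not.
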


As a further step, \citet[Theorem 18]{ramdas2020admissible}  also implies that all admissible sequential test $(\psi_t)_{t\in \TP}$ at level $\alpha$ for the null hypothesis must  have the form 
$$\psi_t=\max_{s\le t}\id_{\{M_s\ge 1/\alpha\}},~~~t\in \TP$$ for some e-process $M$, that is,  rejecting the null hypothesis when   $M$ reaches $1/\alpha$. Therefore, from Theorems \ref{th:validity} and \ref{th:revese}, we conclude that the method we proposed in Algorithm \ref{alg:sebh} is essentially the only way to build sequential tests for this problem.

We summarize our online watermark detection procedure in Algorithm \ref{alg:sebh},
where the null hypothesis is rejected as soon as $M$ reaches $1/\alpha$ or above. 
Two parameters $T\in \N\cup\{\infty\}$  and $\beta\in [0,1)$ can be included optionally to terminate the test, declaring no rejection of the null hypothesis. 
More precisely,
the online test stops when $T$ data points are exhausted 
or the e-process falls below $\beta$.
The default values are $T=\infty$ and $\beta=0$, meaning that there is no early termination. 


\begin{algorithm}[t]
\caption{Online watermark detection using an e-process}  
\label{alg:sebh}
\begin{algorithmic}[1]  
\REQUIRE The sequence of data points $Y_t$, a maximum number $T\in \N\cup\{\infty\}$ of tokens, a termination threshold $\beta\in [0,1)$, and a nominal type I error level $\alpha \in (0,1)$
\STATE Set $Y_0=0$ and $M_0 = 1$
\FOR{$t = 1,2,\dots, T$}
    \STATE Compute the calibrator $f_t\in \Phi_t$ based on $Y_1,\dots,Y_{t-1}$
    \STATE Read $Y_t$ and compute $M_t = M_{t-1} \times f_t(1 - Y_t)$
    \IF{$M_t \ge 1/\alpha$}
        \RETURN ``rejection'' and terminate
    \ENDIF
    \IF{$M_t < \beta$}
        \RETURN ``no rejection'' and terminate
    \ENDIF
\ENDFOR
\RETURN ``no rejection''

\end{algorithmic}
\end{algorithm}

In the next subsections, we present a few specific ways of  constructing e-processes in the proposed detection procedure.

\begin{remark}
It should be clear from Assumption \ref{assm:Y} and Theorem \ref{th:validity} that the proposed methodology is  applicable to any online testing problem where pivotal statistics are sequentially available, such that 
they have known continuous distributions under the null hypothesis. 
The uniqueness result in Theorem \ref{th:revese} further requires that their distributions become stochastically larger under the alternative hypothesis.  
The proposed tests 
also remain valid when the statistics have distributions that are stochastically smaller than known distributions.
\end{remark}

\subsection{Fixed calibrators with adaptive weights}
\label{sec:fixed}

We first consider some fixed choices of deterministic calibrators, which are nonnegative decreasing functions  on $[0,1]$ that integrate to at most $1$.  Some common examples \citep{vovk2021values} are, as functions of $p$,
$$    2(1-p),~~
      p^{-1/2}-1,~~
  -\log{p}, ~~
    \mbox{and} ~~
  \frac{1-p+p\log{p}}{p(\log{p})^2}.$$
  For practical use, we always assume   a calibrator $g$ satisfies 
$
\int_0^1 g(p)\d p =1.
$

For a  calibrator $g$, 
and $t \in \TP$,  let 
$$ E_t=(1-\lambda_t)+\lambda_t g(1-Y_t),$$
where $\lambda_t$ is $[0,1]$-valued and measurable with respect to $\mathcal F_{t-1}$. The e-process $M=(M_t)_{t\in \T}$ is then constructed by $M_t=\prod_{s=1}^t E_s$ as   in \eqref{eq:M}.
With a given $g$,  choosing $(\lambda_t)_{t\in \TP}$ is a nontrivial task and affects the power of the e-process. 
A simple choice is to use a pre-specified  $\lambda \in(0,1)$, resulting in 
the process
\begin{equation}\label{eq:fixed}
M_t=\prod_{s=1}^t (1-\lambda+\lambda g(1-Y_s)),~~~t \in \T.
\end{equation}
We will call the e-process in \eqref{eq:fixed} a \emph{nonadaptive e-process}.

For better power, $(\lambda_t)_{t\in \TP}$ should be chosen adaptively. 
Using the idea of the {empirically adaptive e-process} in \citet[Section 7.8]{RW24}, $\lambda_t$ is given by $\lambda_1=0$ and 
\begin{equation}\label{eq:adaptive} \lambda_t= \argmax_{\lambda\in[0,\gamma]} \left\{\sum_{s=1}^{t-1}
\log\left((1-\lambda)+ \lambda  g(1-Y_s)\right)
\right\},~~~t\ge 2,\end{equation}
  where $\gamma \in (0,1]$. The choice of $\gamma$ is not important as the maximizer in \eqref{eq:adaptive} is typically small except for the first few values of $t$, and we will use the typical choice $\gamma=1/2$ in all implementations.
For a calibrator $g$ and $(\lambda_t)_{t\in \TP}$ in \eqref{eq:adaptive}, we  write 
\begin{equation}\label{eq:tildef}\tilde f^{g}_t=(1-\lambda_t) + \lambda_t g.\end{equation}
It is clear that $\tilde f^{g}_t$  belongs to $\Phi_t$, and hence Theorem \ref{th:validity} applies to $\tilde f^{g}_t$. 
The associated e-process is thus
\begin{equation}\label{eq:EAE}
M_t=\prod_{s=1}^t \tilde f^{g}_t (1-Y_t),~t\in \T,~ \mbox{with any nonconstant $g$ and $\gamma\in (0,1)$},
\end{equation}
and we will call it a  \emph{weight-adaptive  e-process}.

In our numerical experiments, we find that  the calibrator $g(p)=-\log(p)$ with adaptively chosen $\lambda_t$ in \eqref{eq:adaptive} leads to superior performance over other calibrators or fixed $\lambda$. 

\subsection{Online Grenander e-processes}
\label{sec:OG}

Let $\mathcal D$ be the set of all decreasing density functions on $[0,1]$ (that is, the set of calibrators).  
Define the  online Grenander (OG) calibrator \citep{hore2026online}, by 
\begin{equation}\label{eq:EA}
f^{\mathrm{OG}}_{t}= \argmax_{f\in\mathcal D}\sum_{s=0}^{t-1}\log f(1-Y_s), ~~~t\in \TP,
\end{equation} 
where we set $Y_0=0$.
The computation of \eqref{eq:EA} is straightforward as it coincides with the classic   \cite{grenander1956theory}  estimator  applied to $(Y_0,Y_1,\dots,Y_{t-1})$: It is known that the unique maximizer in \eqref{eq:EA}  $f^{\mathrm{OG}}_{t}$, which is unique on $(0,1]$ and flexible at $0$, is the left derivative of the smallest concave majorant of the empirical distribution function of $Y_0,Y_1,\dots,Y_{t-1}$ (see e.g.,  Chapter 8 of \citealp{devroye1987course}). We say that it is unique because the value of $f_t^{\mathrm{OG}}$ at $0$ does not matter, and we can safely set $f_t^{\mathrm{OG}}$ to be continuous at $0$.
The function $f^{\mathrm{OG}}_{t}$  is a left-continuous step function. 
Let  $$
   E_t= f_{t}^{\mathrm{OG}}(1-Y_t),
~~~\mbox{$t\in \TP$.}
 $$
Clearly, $f_t^{\mathrm{OG}}\in \Phi_t$ for each $t$.  
Note that $f_1^{\mathrm{OG}}
=\argmax_{f\in\mathcal D}\log f(1),
$
which is the constant function $1$. Therefore, at the first step, the calibrator is chosen as the constant $1$, and it evolves as the data $Y_1,Y_2,\dots$ are processed.  
Note that if we remove the term $i=0$
from \eqref{eq:EA}, then we would have  
$f_t^{\mathrm{OG}}(y)=0$ for all $y>1-\min\{Y_1,\dots,Y_{t-1}\}$, yielding 
a positive probability for  
$E_t=f_t^{\mathrm{OG}}(Y_t)=0$ (which has the same probability as $Y_t<\min\{Y_1,\dots,Y_{t-1}\}$). 
This is a situation that we would like to avoid; note that $E_t=0$ implies $M_s=0$ for all $s>t$. 
Introducing
$Y_0=0$ into \eqref{eq:EA} addresses this issue, but it may be too conservative, as it acts as if the first p-value $1-Y_0$ is $1$.
A more lenient choice is 
\begin{equation}\label{eq:EA2}   \argmax_{f\in\mathcal D}\left\{\sum_{s=1}^{t-1}\log f(1-Y_s)+\frac{\log f(0)}{2} +\frac{\log f(1)}{2} \right\}, ~~~t\in \TP.
\end{equation}  
Similarly to the case of \eqref{eq:EA}, \eqref{eq:EA2}  also yields a constant calibrator $1$ at step $t=1$.
Intuitively, 
\eqref{eq:EA2} uses a Bernoulli distributed $Y_0$ with mean $1/2$,
and its computation   is the same as the Grenander estimator in \eqref{eq:EA}. 
The optimizers in each of \eqref{eq:EA}--\eqref{eq:EA2}  belong to $\Phi_t$, and hence Theorem \ref{th:validity} applies to them. 
We will call 
the associated e-process 
\begin{equation}
    \label{eq:OGE}
        M_t= \prod_{s=1}^t f_s^{\rm OG}(1-Y_s),~t\in \T, \mbox{ with $(f^{\rm OG}_s)_{s\in \T}$ from   \eqref{eq:EA} or \eqref{eq:EA2}}
\end{equation} 
an \emph{OG e-process}.
Here, for different $s,t$, we allow $f_s^{\rm OG}$  from \eqref{eq:EA}  and 
$f_t^{\rm OG}$ from \eqref{eq:EA2}, although practically there is no benefit to do so. 

Finally, we can include a range  $[a,b]$ with $0<a<1<b<\infty$ to the  density functions in $\mathcal D$ in \eqref{eq:EA}--\eqref{eq:EA2}, and we call the resulting e-process in \eqref{eq:OGE} the \emph{OG e-process with range $[a,b]$}. Note  $f_t^{\rm OG}$  in \eqref{eq:EA} has a natural floor $f_t^{\rm OG}\ge 1/t$ due to $Y_0=0$, which is sufficient for practical purposes. We need the fixed range $[a,b]$ only for a theoretical result in Section \ref{sec:guarantee}.

\begin{remark}
A variant of the above construction is, instead of using all data $s=1,\dots,t-1$ up to step $t-1$ to compute $f^{\rm OG}_t$, 
to use a moving window such as $s=t-k,\dots,t-1$ for a fixed $k$,
which may work better for some time-series models. For the problem of LLM watermark detection in this paper, such a variant does not seem to be beneficial. 
Moreover, instead of a Bernoulli distribution for $Y_0$, we can choose any distribution with mean no larger than $1/2$ and support including $0$, which achieves the same purpose.
\end{remark} 

In our experiments, we use the formulation in \eqref{eq:EA2}, which avoids the issue of being overly conservative and it is computationally efficient.

The OG calibrator is useful in several settings. For instance, \cite{hore2026online} showed that in an iid setting, the OG e-process with range $[a,b]\subseteq (0,\infty)$ has an optimal regret bound.

\subsection{Average e-processes}
\label{sec:average}
Our practical recommendation for constructing the online test is to use the arithmetic average of the OG e-process and the weight-adaptive e-process with the log calibrator $g=-\log$. That is, to use
\begin{equation}
    \label{eq:average}
    M_t= \frac{1}{2} \prod_{s=1}^t \tilde f^{g}_s (1-Y_s) + \frac{1}{2}\prod_{s=1}^t f_s^{\rm OG}(1-Y_s),~~~t\in \T.
\end{equation}
Clearly, the average of e-processes is an e-process.
Alternatively, one can choose different weights than $(1/2,1/2)$ in \eqref{eq:average}. For simplicity, we will call the e-process in \eqref{eq:average} an \emph{average e-process}. 

The test based on the average e-process  in \eqref{eq:average} may have better performance than both tests based on the individual ones, as shown by our empirical results in Section \ref{sec:num}. 
As a feature of e-processes, the arithmetic average of two e-processes typically has the advantages of the two;
see e.g., the mixture e-processes  in \cite{RW24}. In particular, it is straightforward to see that the asymptotic growth rate of the average e-process is the maximum of those of the two individual e-processes.
Taking an average of more than two e-processes (e.g., with different calibrators) also has such a feature, but it does not yield noticeable further improvements on power performance according to  our empirical experiments.

 \section{Theoretical power guarantees}

\label{sec:guarantee}
In practice, the next-token prediction $\ntp_t$ is not available to the tester,  and it is not stationary. This makes theoretical power guarantees difficult to analyze.
 We will first consider a situation that is relevant for the token generating process. 
Throughout this section, to state asymptotic results, the time horizon is infinite,
that is, $\TP=\N$.
\begin{assumption}
\label{assm:randomness}
For some $\delta>0$ and $m\in \N$, 
 $\ntp_t\in \Delta^\delta_K$ with probability $1$ for each $t$, and $(\ntp_t)_{t \in \TP}$ is $m$-dependent (that is, $\sigma(\ntp_{t}:t\le t_1)$ and $\sigma(\ntp_t:t\ge t_2)$ are independent when $t_2-t_1>m$). 
\end{assumption}

Note that the distribution of $(\ntp_t)_{t \in \TP}$ is the same under both the null and the alternative hypotheses. 
The assumption $\delta>0$ means that the temperature of the LLM is bounded away from $0$,
and thus the token generating process has enough randomness. 
The assumption of $m$-dependence means that the generation of one token   only relies on a maximum of $m$ past tokens, and it is consistent with the real-world LLM generation mechanism \citep{aar2023,kir23llm,aos2025wm}.
This assumption is needed to apply a form of the law of large numbers.
Some other forms of weak or short-term dependence such as $\alpha$-mixing are also sufficient for this purpose, and we omit such discussions.

With an e-process $M$ on an infinite horizon and $\alpha\in(0,1)$, we say that the online test at level $\alpha$ is \emph{consistent} 
if
$$
\q\left(\sup_{1\le t\le T} M_t\ge \frac{1}{\alpha}\right) \to 1 \mbox{ as $T\to\infty$;}
$$
that is, the probability of rejecting the null hypothesis goes to $1$ if the data are generated under the alternative hypothesis. 
The next result is specific to the Gumbel-max watermarking scheme described in Section \ref{sec:Gumbel}.

\begin{theorem}\label{th:fixed}
Suppose that Assumption \ref{assm:randomness} holds and $g$ is a nonconstant calibrator. For the Gumbel-max watermark, there exists $\lambda \in(0,1)$ that depends only on $\delta$ and $g$, such that 
the nonadaptive e-process $M$  in \eqref{eq:fixed}  has exponential growth under the alternative hypothesis: 
$$
\liminf_{T\to\infty} \frac 1T \log M_T  >0  \mbox{~almost surely.}
$$
In particular, the  online test at any level $\alpha\in(0,1)$ is consistent.
\end{theorem}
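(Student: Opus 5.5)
We will show that the increments $X_t=\log(1-\lambda+\lambda g(1-Y_t))$ have conditional means bounded below by a positive constant that is uniform over $\ntp_t\in\Delta_K^\delta$, and then apply a law of large numbers. Since $\log M_T=\sum_{t=1}^T X_t$, the claim reduces to proving that $\liminf_T \frac1T\sum_{t\le T}X_t>0$ almost surely.

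\emph{Step 1 (uniform lower bound on the one-step drift).} For fixed $\lambda\in(0,1)$, the map $y\mapsto h_\lambda(y)=\log(1-\lambda+\lambda g(1-y))$ is increasing in $y$, because $g$ is decreasing. It takes values in $[\log(1-\lambda),\infty]$. By Lemma~\ref{lem:gm}(iii), for every $\ntp\in\Delta_K^\delta$ we have
\[
\E^{\q^{\ntp}}[h_\lambda(Y)]\ \ge\ \E^{\q^{\ntp^*}}[h_\lambda(Y)]=:c(\lambda),
\]
where $\ntp^*=(1-\delta,\delta,0,\dots,0)$. If $h_\lambda$ is unbounded above, we first truncate it at a constant, or equivalently apply Lemma~\ref{lem:gm}(iii) to $\min(h_\lambda,n)$ and use monotone convergence.

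It remains to choose $\lambda$ so that $c(\lambda)>0$; such a $\lambda$ depends only on $\delta$ and $g$.
\begin{itemize}
\item The function $c$ satisfies $c(0)=0$.
\item Since $\log(1+x)\ge x-x^2/(2(1-\lambda)^2)$ for $x\ge-\lambda$ (with $x=\lambda(g-1)$), the right derivative at $0$ is $c'(0^+)=\E^{\q^{\ntp^*}}[g(1-Y)]-1$.
\item Because $Y$ under $\q^{\ntp^*}$ has a strictly increasing density by \eqref{eq:FPY-den}, its law is strictly stochastically larger than uniform. Hence $1-Y$ is strictly stochastically smaller than uniform.
\item Since $g$ is decreasing and nonconstant, and $\int_0^1 g=1$, we get $\E^{\q^{\ntp^*}}[g(1-Y)]>1$.
\end{itemize}
Making the derivative argument rigorous is the main technical obstacle, because $g$ may be unbounded. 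One option is to replace $g$ by $g\wedge n$: if $n$ is large, $\E[g\wedge n(1-Y)]-\int_0^1 g\wedge n>0$ still holds, and monotonicity in $g$ of $\log(1-\lambda+\lambda g)$ transfers the bound. Another option is to use $\log(1+x)\ge x-x^2$ on $\{g\le n\}$ and drop the positive part elsewhere. Either way we obtain some $\lambda\in(0,1)$ with $\kappa:=c(\lambda)>0$. We fix this $\lambda$ from now on.

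\emph{Step 2 (law of large numbers).} Write $X_t=\mu_t+D_t$, where $\mu_t=\E[X_t\mid\mathcal F_{t-1}]$ and $D_t=X_t-\mu_t$. Conditional on $\mathcal F_{t-1}$, the law of $Y_t$ is $F^{\ntp_t}$. Hence $\mu_t=\phi(\ntp_t)$ with $\phi(\ntp)=\E^{\q^\ntp}[h_\lambda(Y)]\ge\kappa$ by Step 1.

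To control $\frac1T\sum_t D_t$, we will use a martingale strong law, which requires bounded conditional second moments of $X_t$:
\begin{itemize}
\item The lower bound is automatic, since $X_t\ge\log(1-\lambda)$.
\item For the upper tail, we use $\log(1+x)\le\sqrt{x}$ for $x\ge0$, so $X_t^2\lesssim 1+\lambda g(1-Y_t)$.
\item The density of $Y$ under $\q^{\ntp}$ is at most $K$ by \eqref{eq:FPY-den}. Therefore $\E[g(1-Y_t)\mid\mathcal F_{t-1}]\le K\int_0^1 g=K$.
\end{itemize}
Thus $\sup_t\E[D_t^2\mid\mathcal F_{t-1}]<\infty$, and $\frac1T\sum D_t\to0$ almost surely by the martingale SLLN (via Kronecker's lemma and $\sum_t \E[D_t^2]/t^2<\infty$).

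Consequently, almost surely,
\[
\liminf_T\frac1T\log M_T\ \ge\ \kappa>0 .
\]
The $m$-dependence in Assumption~\ref{assm:randomness} is not even needed for the liminf bound, since the drift is bounded uniformly. Alternatively, one could use it to get an ergodic average of $\phi(\ntp_t)$, by splitting into $m+1$ independent subsequences and applying the SLLN on each.

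\emph{Step 3 (consistency).} Since $\log M_T/T\ge\kappa/2$ eventually, almost surely, we have $M_T\to\infty$ almost surely. Therefore $\q(\sup_{t\le T}M_t\ge1/\alpha)\to1$ for any $\alpha\in(0,1)$.
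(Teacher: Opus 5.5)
Your proof is correct. Step~1 is essentially the paper's Lemma~\ref{lem:E1}. Both arguments reduce the uniform drift bound over $\Delta_K^\delta$ to the extremal vector $\ntp^*$ via Lemma~\ref{lem:gm}(iii), and both then use $\E^{\q^{\ntp^*}}[g(1-Y)]>1$. The paper gets that strict inequality from Chebyshev's association inequality and obtains $\lambda_0$ from Theorem~3.14 of \cite{RW24}; you argue directly.

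The genuine difference is Step~2. The paper bounds $\E^{\q}[X_t^2\mid\ntp_t]$ by $K\,\E^{\p}[X_t^2]$ and applies a strong law for $m$-dependent sequences. That requires the increments $X_t=\log(1-\lambda+\lambda g(1-Y_t))$ themselves to be $m$-dependent. The paper asserts this, but it does not follow immediately from $m$-dependence of $(\ntp_t)$ alone, because later NTP vectors are driven by earlier tokens and hence by earlier pseudo-random vectors. Your decomposition $X_t=\phi(\ntp_t)+D_t$, with $\phi\ge\kappa$ on $\Delta_K^\delta$ and $\sup_t\E[D_t^2\mid\mathcal F_{t-1}]<\infty$, uses only that $Y_t$ has conditional law $F^{\ntp_t}$ given $\mathcal F_{t-1}$. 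The martingale strong law plus the uniform drift bound then gives the result with no $m$-dependence at all. This is cleaner and more general. What an $m$-dependent or ergodic route would add is the actual limit of $\frac1T\log M_T$ rather than a positive lower bound, which the theorem does not need.

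One small correction in Step~1. After truncating, the inequality you need is $\E^{\q^{\ntp^*}}[(g\wedge n)(1-Y)]>1$, not $>\int_0^1 g\wedge n$. The latter is weaker, since $\int_0^1 g\wedge n\le 1$. The correct inequality does hold for large $n$ by monotone convergence.

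In fact no truncation is needed. Put $G=g(1-Y)\ge 0$. By concavity, the quotient $\lambda^{-1}\log(1+\lambda(G-1))$ increases to $G-1$ as $\lambda\downarrow 0$. For $\lambda\le 1/2$ it is bounded below by $2\log(1/2)$. Monotone convergence then gives $c(\lambda)/\lambda\to\E^{\q^{\ntp^*}}[G]-1>0$ directly.
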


A useful lemma will facilitate the proof of Theorem \ref{th:fixed}.

\begin{lemma} \label{lem:E1}
Let $Y$ be the   Gumbel-max statistic in Section \ref{sec:Gumbel}. 
    For any  $\delta>0$ and any nonconstant calibrator $g$,
    there exists $\lambda_0\in (0,1)$ such that   
    $$\inf_{\ntp\in\Delta^\delta_K} \E^{\q^\ntp} [\log (1-\lambda+\lambda g(1-Y))]>0$$
    for all $\lambda\in (0,\lambda_0]$. 
\end{lemma}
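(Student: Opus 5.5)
The plan is to reduce the uniform statement to a single worst-case NTP vector via Lemma \ref{lem:gm}(iii), and then handle the logarithm by truncation plus a second-order expansion. Write $X=g(1-Y)$. Since $g$ is a decreasing calibrator, $y\mapsto g(1-y)$ is increasing on $[0,1]$. For a truncation level $B>1$, the function $h_B(y)=\min\{g(1-y),B\}$ is also increasing.

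\emph{Step 1 (worst case has positive drift).} Take $\ntp^*=(1-\delta,\delta,0,\dots,0)$, which is nondegenerate.
\begin{itemize}
\item By \eqref{eq:FPY-den}, the density of $Y$ under $\q^{\ntp^*}$ is strictly increasing, so $Y$ is strictly super-uniform and $1-Y$ is strictly stochastically smaller than a uniform variable.
\item Since $g$ is decreasing, nonconstant, and $\int_0^1 g=1$, we get $\E^{\q^{\ntp^*}}[X]=1+c$ for some $c>0$. To justify strictness, write $\E[g(1-Y)]-1=\int_0^1 (\q(1-Y<p)-p)\,\d(-g)(p)$ by a layer-cake argument. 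The integrand is positive on $(0,1)$ and $-g$ is a nonconstant increasing function, so the integral is positive.
\item By monotone convergence, choose $B$ large enough that $\E^{\q^{\ntp^*}}[h_B(Y)]\ge 1+c/2$.
\item Lemma \ref{lem:gm}(iii), applied to the increasing function $h_B$, gives $\inf_{\ntp\in\Delta_K^\delta}\E^{\q^\ntp}[h_B(Y)]\ge 1+c/2$.
\end{itemize}
Truncating is what makes the subsequent bounds uniform when $g$ is unbounded, for example $g=-\log$.

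\emph{Step 2 (truncation and expansion).} Since $\lambda\mapsto$ and $x\mapsto\log(1-\lambda+\lambda x)$ are increasing in $x$, we have $\log(1-\lambda+\lambda X)\ge\log(1+\lambda Z)$ with $Z=h_B(Y)-1\in[-1,B-1]$. For $u\ge-\lambda$ with $\lambda<1$, the elementary inequality
$$\log(1+u)\ge u-\frac{u^2}{2(1-\lambda)^2}$$
holds. It follows from Taylor's theorem with the second derivative bounded by $(1-\lambda)^{-2}$ on $[-\lambda,\infty)$. Taking expectations, uniformly in $\ntp\in\Delta_K^\delta$,
$$\E^{\q^\ntp}[\log(1-\lambda+\lambda g(1-Y))]\ge \frac{\lambda c}{2}-\frac{\lambda^2B^2}{2(1-\lambda)^2}.$$
The right-hand side is positive for all $\lambda\in(0,\lambda_0]$ with $\lambda_0$ small, for instance $\lambda_0=\min\{1/2,\,c/(4B^2)\}$. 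This $\lambda_0$ depends only on $\delta$ (through $\ntp^*$) and on $g$.

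\emph{Main obstacle.} The delicate point is Step 1: showing the strict inequality $\E^{\q^{\ntp^*}}[g(1-Y)]>1$, and making it survive the truncation uniformly over $\ntp$. Lemma \ref{lem:gm}(iii) reduces the uniformity to the single vector $\ntp^*$, but it has to be applied to $h_B$ rather than to $g(1-\cdot)$ itself. This works because $h_B$ remains increasing.
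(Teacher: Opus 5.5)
Your proof is correct, but its second half takes a different route from the paper's.

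\textbf{The paper's route.} The paper applies Lemma~\ref{lem:gm}(iii) directly to the increasing function $y\mapsto\log(1-\lambda+\lambda g(1-y))$ for each fixed $\lambda$. So the infimum over $\Delta_K^\delta$ is exactly the value at $\ntp^*$. It then shows $\E^{\q^{\ntp}}[g(1-Y)]>1$ via Chebyshev's association inequality; your layer-cake argument with strict super-uniformity serves the same purpose. Next it imports from Theorem 3.14 of \cite{RW24} the existence of a single $\lambda_0$ with positive e-power at $\ntp^*$. Finally it extends positivity to all of $(0,\lambda_0]$ using concavity of $\lambda\mapsto\E^{\q^{\ntp^*}}[\log(1-\lambda+\lambda g(1-Y))]$ and its vanishing at $\lambda=0$.

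\textbf{Your route.} You reduce only a linear functional, the mean of the truncated e-value $h_B(Y)$, to $\ntp^*$. You then control the logarithm by an explicit second-order bound that is uniform in $\ntp$.

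Your closing remark that Lemma~\ref{lem:gm}(iii) ``has to be'' applied to $h_B$ is therefore not accurate. The monotone function $\log(1-\lambda+\lambda g(1-\cdot))$ is an equally valid input. Truncation is needed only in your route, to keep the quadratic remainder bounded when $g$ is unbounded.

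\textbf{Comparison.} Your argument is self-contained: it gives an explicit $\lambda_0=c/(4B^2)$ and an explicit margin $\lambda c/2-\lambda^2B^2/(2(1-\lambda)^2)$, with no appeal to concavity or the external theorem. The paper's argument is shorter and sharper. It identifies the exact worst case of the log-objective, so the largest admissible $\lambda_0$ can be found numerically from \eqref{eq:lambda0}.

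\textbf{A check on your constants.} Strict positivity at $\lambda=\lambda_0$ needs $\lambda_0<1/2$. This holds: $h_B\le B$ forces $B\ge 1+c/2$, hence $c/(4B^2)<1/4$. So the minimum with $1/2$ is never active, and $\lambda B^2/(1-\lambda)^2<4\lambda B^2\le c$ on $(0,\lambda_0]$.
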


The key property of the Gumbel-max watermark used in Theorem \ref{th:fixed} is the uniform bound  over $\Delta^\delta_K$ established in Lemma \ref{lem:gm}. 
For given $g$ and $\delta$,
the value of $\lambda $ in Theorem \ref{th:fixed} can be computed from solving for any $\lambda_0>0$ satisfying \eqref{eq:lambda0}, which is a simple numerical task.

  Theorem \ref{th:fixed} gives a power-one guarantee for the nonadaptive e-process with a suitably chosen fixed $\lambda\in (0,1) $, 
which supports using the online test.
Nevertheless, the power  of the online test for nonadaptive e-process  may not be good, because this guarantee needs to guard against all possible NTP vectors, including the most adversarial, but unrealistic, ones. 
In numerical experiments, we find that 
the adaptive weights in \eqref{eq:tildef} give rise to much more powerful e-processes. 

Next, for a high-level understanding, we consider an idealistic situation. Under this assumption,  we can establish the 
consistency of the weight-adaptive e-process 
and 
the OG e-process.

\begin{assumption}
\label{assm:iid}
  The vectors $\ntp_1,\ntp_2,\dots$  are deterministic and identical to $\ntp\in \Delta_K^\circ$.
\end{assumption}
Although Assumption \ref{assm:iid} is quite   unrealistic,  it helps to highlight that 
the weight-adaptive e-process and the OG e-process are reasonable choices, at least in an idealistic setting. The next   result is not specific to the Gumbel-max watermark, and applies to all watermark schemes with statistics satisfying the following assumption.
\begin{assumption}
\label{assm:power}
Given the constant $\ntp$ in Assumption \ref{assm:iid}, $Y_t$, $t\in \TP$ are iid with a strictly super-uniform distribution $F^\ntp$ satisfying $\mathrm{KL}(F^\ntp \Vert \mathrm{U})<\infty$.
\end{assumption}
 It is straightforward to check that Assumption \ref{assm:power} is satisfied by the Gumbel-max watermark.

\begin{theorem}\label{th:OG}
Suppose that Assumptions  \ref{assm:iid} and \ref{assm:power} hold.
 The weight-adaptive e-process $M$   in \eqref{eq:EAE},  the 
 OG  e-process $M$ in \eqref{eq:OGE} with range $[a,b]$, and the average e-process in \eqref{eq:average} all have 
 exponential growth under the alternative hypothesis: 
$$
\lim_{T\to\infty} \frac 1T \log M_T  >0 \mbox{~in probability.}
$$
In particular, the online test at any level $\alpha\in(0,1)$ is consistent.
\end{theorem}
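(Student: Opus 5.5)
Under Assumptions \ref{assm:iid} and \ref{assm:power}, $P_t:=1-Y_t$ are iid with distribution $G$ on $[0,1]$, and strict super-uniformity of $Y_t$ means $G(p)>p$ for $p\in(0,1)$. The plan is to show that $\frac1T\log M_T$ converges in probability to a strictly positive constant for each of the three e-processes. Consistency then follows directly: if $\frac1T\log M_T\to c>0$ in probability, then $\q(M_T\ge 1/\alpha)\to1$, and $\sup_{t\le T}M_t\ge M_T$. For the average e-process, $\log M_T\ge \log(1/2)+\max$ of the two individual logs, so that case reduces to the other two.

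\textbf{Weight-adaptive e-process.} Define $\ell(\lambda)=\E[\log(1-\lambda+\lambda g(P))]$ on $[0,\gamma]$. Since $1-\lambda\ge 1-\gamma>0$, the integrand is bounded below. It is bounded above by $\log(1+g(P))$, which is integrable because $\E[g(P)]$ is finite whenever $\E[\log_+ g(P)]<\infty$. In fact $\E^G[g]\le$ the integral of $g$ against $G$, and the relevant bound comes from $\E^G[\log g]\le \mathrm{KL}(G\Vert \mathrm U)+\log\int g\,\d p$, which is finite by Assumption \ref{assm:power}. The function $\ell$ is concave, and $\ell(0)=0$.

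To get positivity, note $\ell'(0)=\E^G[g(P)]-1$. Because $g$ is decreasing and nonconstant and $G$ strictly stochastically dominates $\mathrm U$ from below in $p$, we have $\E^G[g(P)]>\int_0^1 g=1$. Write $\E^G[g]=\int g\,\d G$ and integrate by parts against $G(p)-p>0$. Hence $\ell^*:=\max_{[0,\gamma]}\ell>0$.

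The main step is a regret bound. The empirical maximizer $\lambda_t$ is a follow-the-leader scheme for the exp-concave losses $-\log(1-\lambda+\lambda g(P_s))$, with bounded gradients on $[0,\gamma]$ in the direction that matters. I would proceed in two stages:
\begin{enumerate}[(a)]
\item Use a uniform LLN over the compact set $[0,\gamma]$ (the summands are concave and dominated) to get $\sup_\lambda\left|\frac1t\sum_{s<t}\log(1-\lambda+\lambda g(P_s))-\ell(\lambda)\right|\to0$ a.s. This gives $\ell(\lambda_t)\to\ell^*$ a.s.
\item Write $\frac1T\log M_T=\frac1T\sum_t \bigl[h_t(\lambda_t)-\ell(\lambda_t)\bigr]+\frac1T\sum_t\ell(\lambda_t)$, where $h_t(\lambda)=\log(1-\lambda+\lambda g(P_t))$. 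The second term tends to $\ell^*$ by Cesàro. The first is an average of martingale differences, since $\lambda_t$ is predictable. These differences are dominated by an integrable envelope $\sup_\lambda|h_t(\lambda)|\le |\log(1-\gamma)|+\log(1+g(P_t))$, so a martingale LLN with truncation gives convergence to $0$ in probability.
\end{enumerate}
Together these yield $\frac1T\log M_T\to\ell^*>0$.

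\textbf{OG e-process with range $[a,b]$.} Take the class $\mathcal D_{a,b}$ of decreasing densities with values in $[a,b]$. Each $\log f$ is then bounded by $\max(|\log a|,\log b)$. Choose a comparator $f^*\in\mathcal D_{a,b}$ with $\E^G[\log f^*(P)]>0$; a positive-rate example is $f^*=1-\lambda+\lambda g$ with a bounded nonconstant decreasing $g$ and small $\lambda$, as above. The martingale part $\frac1T\sum_t[\log f_t(P_t)-\E^G\log f_t(P)]$ has bounded increments, so Azuma's inequality sends it to $0$ a.s. It therefore suffices to show $\E^G\log f_t^{\rm OG}(P)\to\sup_{f\in\mathcal D_{a,b}}\E^G\log f(P)>0$. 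Here $f_t^{\rm OG}$ is the empirical maximizer over $\mathcal D_{a,b}$, with the extra $Y_0$ term contributing $O(1/t)$.

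This convergence follows from a uniform LLN over $\{\log f: f\in\mathcal D_{a,b}\}$. That class consists of bounded monotone functions, so it is Glivenko--Cantelli, with bracketing entropy finite for bounded monotone classes. The usual argmax argument then gives $\E^G\log f_t\ge \sup-2\sup_f|\hat{\E}_t-\E^G|-O(1/t)$.

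\textbf{Main obstacle.} I expect the hardest part to be the integrability and uniform-LLN control for the unbounded calibrator $g=-\log$ in the weight-adaptive case, namely showing $\E^G[\log(1+g(P))]<\infty$ from $\mathrm{KL}(G\Vert\mathrm U)<\infty$. I would try the Donsker--Varadhan inequality $\E^G[\phi]\le \mathrm{KL}(G\Vert \mathrm U)+\log\int e^{\phi}\,\d p$ with $\phi=\log(1+g)$. This gives $\E^G\log(1+g)\le \mathrm{KL}+\log 2$, which also resolves the integrability issue in stage (b) above.
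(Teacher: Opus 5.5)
Your proof is correct, but it takes a different route from the paper's. The paper does not prove convergence of the plug-in choices itself.
\begin{itemize}
\item For the weight-adaptive process, it cites Proposition EC.2 of \cite{wang2025backtesting} to get $\frac1T(\log M_T-\log M^*_T)\to0$ in $L^1$, where $M^*$ uses the population maximizer $\lambda^*$.
\item For the OG process with range $[a,b]$, it cites Theorem 3.1 of \cite{hore2026online} to get almost sure vanishing of the normalized regret against the population-optimal density $f^*$ in the range-$[a,b]$ class.
\end{itemize}
The SLLN is then applied only to these oracle products. Positivity comes from Theorem 3.14 of \cite{RW24} together with the same comparator $1-\lambda^*+\lambda^* g$ that you use.

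You instead prove the required rate-free regret statements directly. You split $\frac1T\log M_T$ into two averages:
\begin{itemize}
\item a martingale average, handled by a truncated martingale WLLN (respectively Azuma);
\item a Ces\`aro average of $\ell(\lambda_t)$ (respectively $\E^G\log f_t$), which you drive to the optimal value by a uniform LLN plus the usual argmax inequality. The uniform LLN is Jennrich-type on $[0,\gamma]$, and Glivenko--Cantelli for bounded monotone functions in the OG case.
\end{itemize}
Your version is self-contained and uses only classical tools. It also makes explicit two points the paper passes over quickly: the integrability of $\log(1+g(P))$, obtained via Donsker--Varadhan (the paper only says it is ``guaranteed by Assumption \ref{assm:power}''), and the sign $\ell'(0^+)=\E^G[g(P)]-1>0$, obtained from $G(p)>p$. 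The paper's version is shorter and inherits quantitative regret bounds that your argument does not provide.

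Two small remarks.
\begin{itemize}
\item Your first justification of integrability, that $\E[g(P)]<\infty$ follows from a finite log-moment, is false. In fact $\E^G[g(P)]$ can be infinite under Assumption \ref{assm:power}: take $g(p)=p^{-1/2}-1$ and $G$ with density $\tfrac12p^{-1/2}$. Your Donsker--Varadhan bound on $\E^G[\log(1+g(P))]$ is all you actually need, and $\ell'(0^+)=+\infty$ is harmless for positivity.
\item The OG factor in \eqref{eq:average} need not carry a range $[a,b]$. So the average case should be reduced to the weight-adaptive factor alone, as the paper does. Your inequality $\log M_T\ge\log\frac12+\max(\cdot)$ already permits this.
\end{itemize}
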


\section{Experiments on open-source LLMs}
\label{sec:num}

In this section, we investigate the performance of different methods in watermark detection. Specifically, we assess the power of detection procedures based on token sequences generated by simulated or open-source language models with the Gumbel-max watermark (Section \ref{sec:Gumbel}).
The comparative analysis includes the following detection methods:

{
\begin{itemize}
\item The sum-based method using the score function $h_{\rm ars}(y)=-\log(1-y)$ \citep{aar2023}.
\item The sum-based method using the score function $h_{\log}(y)=\log(y)$ \citep{fernandez2023wm,kuditipudi2024robust}.
\item The optimized sum-based methods using the score function $h_{\text{gum},d}^*$ proposed by \cite{aos2025wm}, with the hyper-parameter $d=0.1$ and $d=0.01$, respectively.
\item The weight-adaptive e-process in \eqref{eq:EAE}, with the fixed calibrator $g(p)=-\log(p)$.
\item The  OG  e-process in \eqref{eq:OGE}     with $f^{\rm OG}_t$ calculated by \eqref{eq:EA2}.
\item The average e-process  in \eqref{eq:average} with $g(p)=-\log(p)$.
\end{itemize}
}

We first evaluate these methods on simulated data. The experimental setup essentially follows the approach described in \citet{aos2025wm}. We generate a vocabulary $\mathcal{W}$ of size 1,000 and assess Type II, Type I and sequential Type I errors throughout 1,000 repeated experiments. 
Here, sequential Type I errors are computed through sequential monitoring; that is, for $t=1,2,\dots,$ we run the test on $t$ tokens, and declare rejection (watermark is detected) as soon as the test for $t$ tokens rejects. For traditional tests that are designed for fixed token length, this will inflate the Type I error. 

In our simulation, we model LLM’s NTP distributions as spike distributions: we set its largest probability as $1-\Delta_t$, where $\Delta_t$ is iid sampled from $\mathrm{U}(0.001,\delta)$ for $t=1,\dots,700$, and the remaining probabilities are firstly sampled from $\mathrm{U}(0,1)$ and subsequently normalized such that $||\mathbf{P}_t||_1=1$. We assess (sequential) Type I error rates on unwatermarked text, and Type II error rates on text generated from the same model with the Gumbel-max watermark.
The error rate curves are depicted in Figure \ref{fig0}.

We proceed to investigate the empirical performance of watermark detection methods for text sequences generated by the LLM OPT-1.3B, following the experimental setup in prior works \citep{kir23llm,aos2025wm}. Although OPT-1.3B has fewer parameters compared to commercial LLMs (and it may therefore be regarded as a small language model), evaluating watermark detection methods is a statistical task that is not about the quality or intelligence level of the text generated; we expect similar patterns for the power of watermark detection for more complicated LLMs. The average Type I and Type II error curves are evaluated based on 500 independent repeated experiments. We present the numerical results under two different temperature parameters (0.5 and 1, respectively) in Figure \ref{fig1}.

Finally, we implement these methods to detect watermarked text under human editing. After obtaining the initial pivotal statistics $(Y_t)_{t\in\TP}$, we let $Y'_t = (1-B_t)Y_t + B_t U_t$ for $t\in\TP$, where $U_t$ are iid $\mathrm{U}(0,1)$ variables, $B_t=0$ for $t\leq 50$ and $B_t$ are iid $\mathrm{Bin}(r)$ variables for $t>50$. Subsequently, we substitute $(Y'_t)_{t\in\TP}$ for $(Y_t)_{t\in\TP}$ as the input to all detection methods. This modification can effectively synthesize the scenario where a portion of the generated tokens is replaced by human-written content. The parameter $r$ indicates the proportion of human edits. We set $r\in \{0.1,0.3,0.5\}$ based on the setup used in the top row of Figure \ref{fig1} and evaluate the Type II error rates based on 500 independent repeated experiments. The results are depicted in Figure \ref{fig2}.

\begin{figure}[t]
   \centering
   \includegraphics[width=1\linewidth]{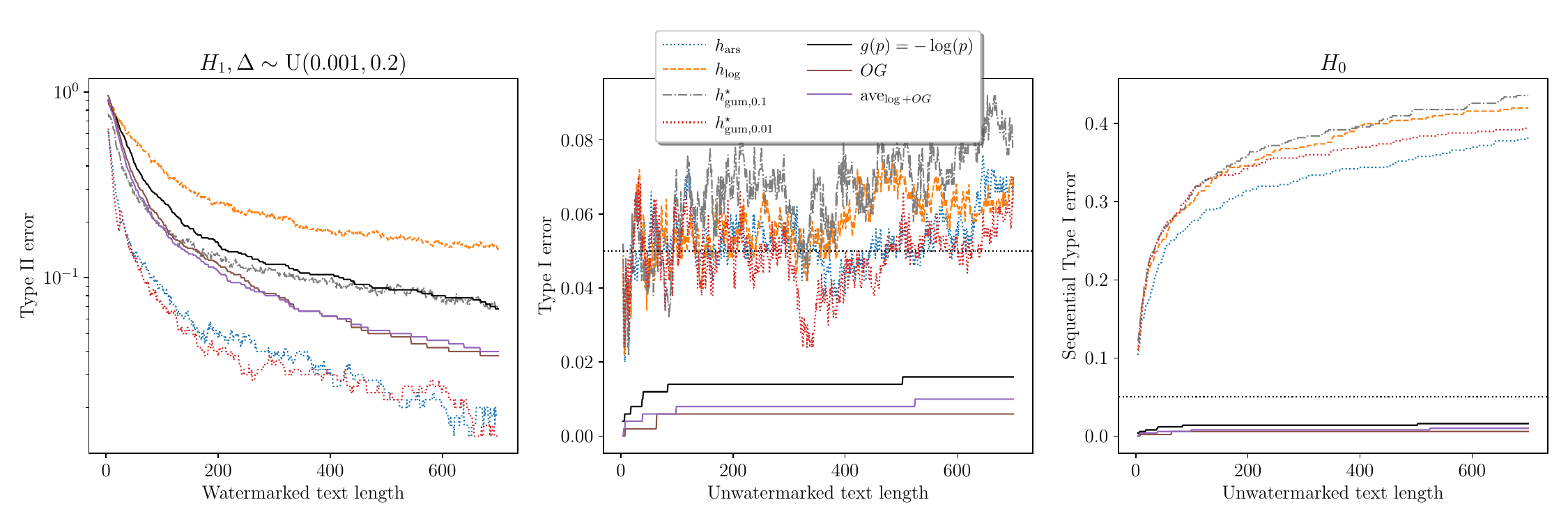}
   \includegraphics[width=1\linewidth]{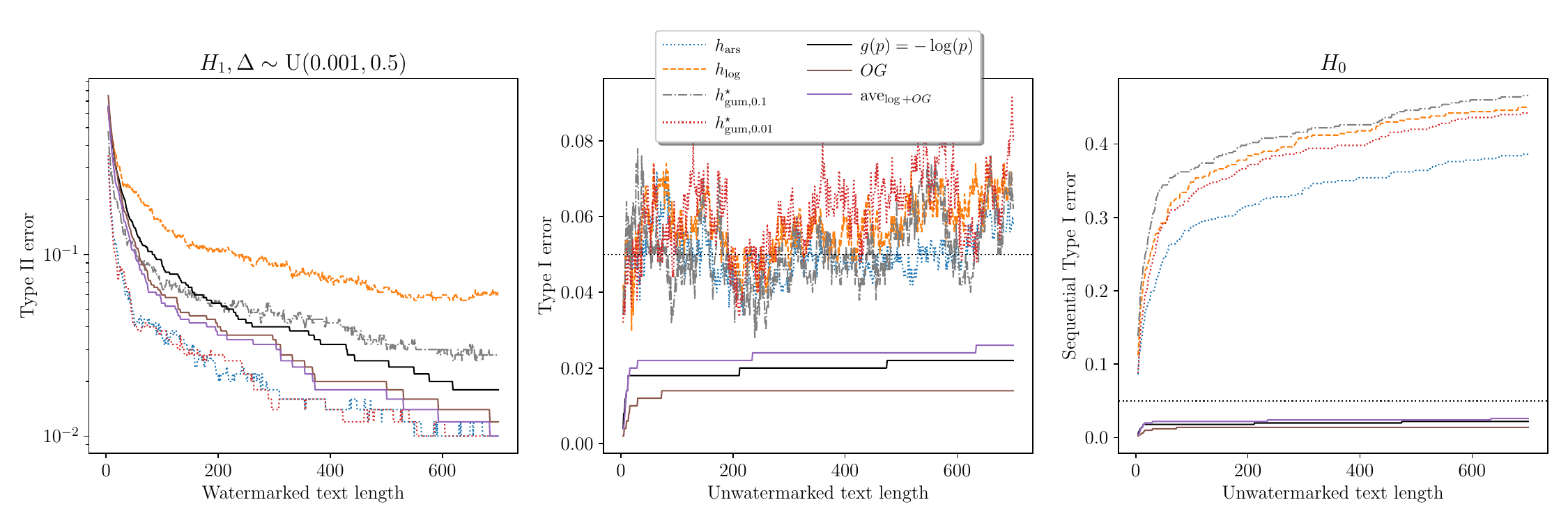}
   \caption{ Average Type II error (on a log scale), Type I error and Sequential Type I error rates versus text length on the simulated data with $\delta=0.2$ (top), and $\delta=0.5$ (bottom).}
   \label{fig0}
\end{figure}

\begin{figure}[t]
    \centering
\includegraphics[width=1\linewidth]{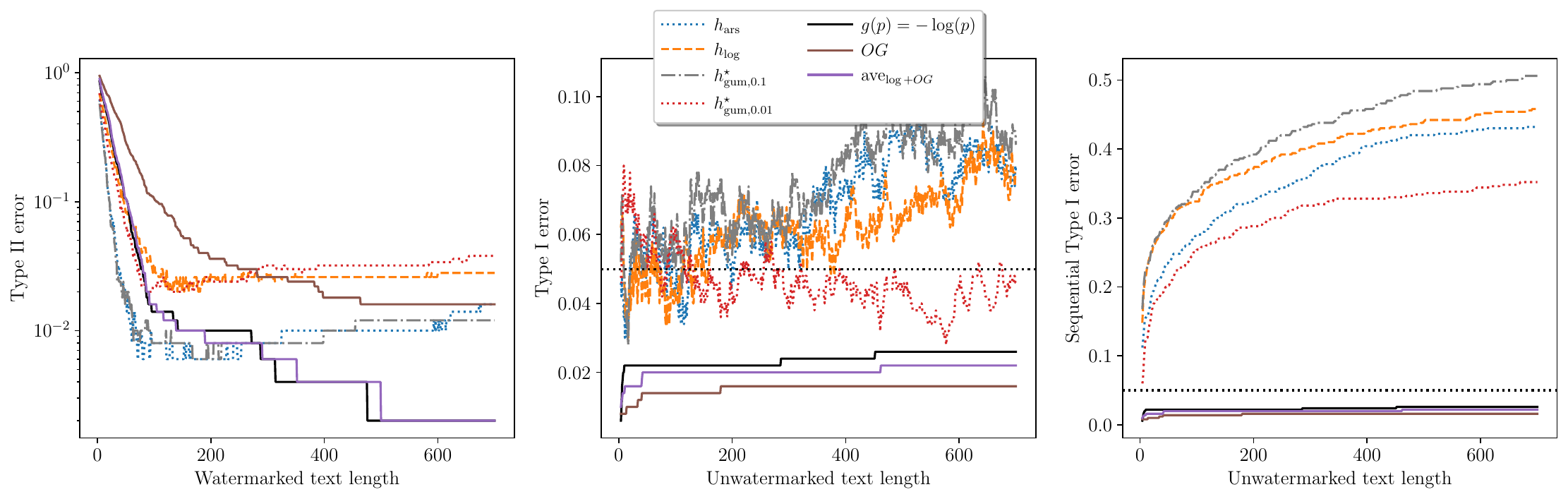}
\includegraphics[width=1\linewidth]{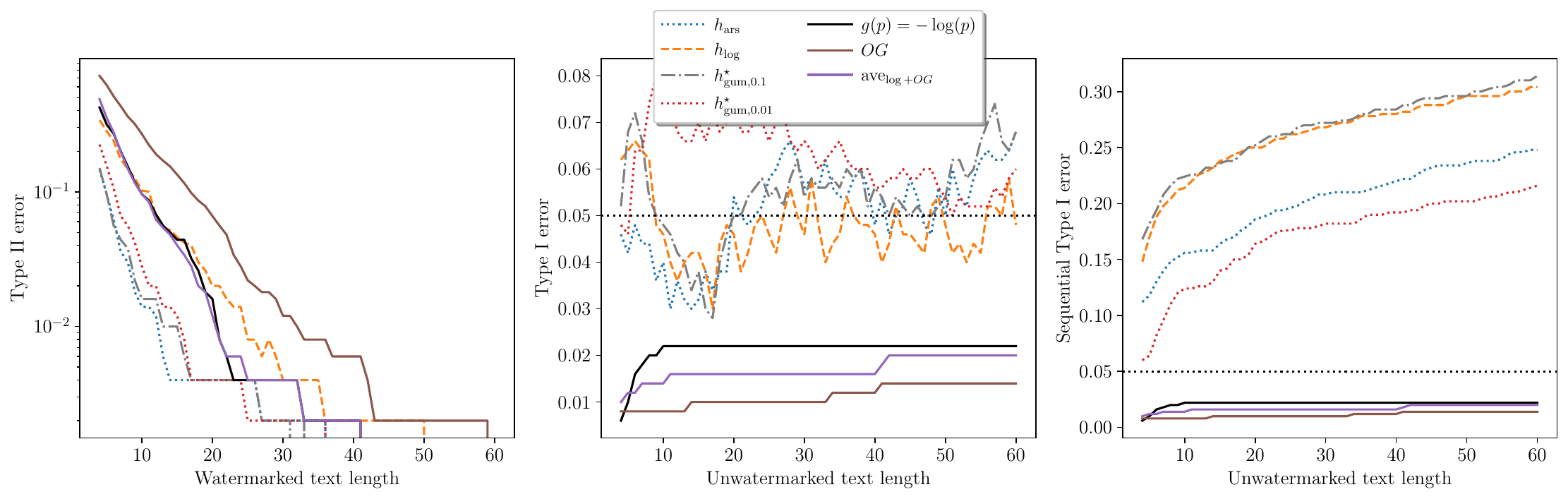}
    \caption{Average Type II error (on a log scale), Type I error and Sequential Type I error rates versus text length on the OPT-1.3B model with temperature parameters at 0.5 (top), and 1 (bottom).}
    \label{fig1}
\end{figure}

\begin{figure}[t]
    \centering
\includegraphics[width=1\linewidth]{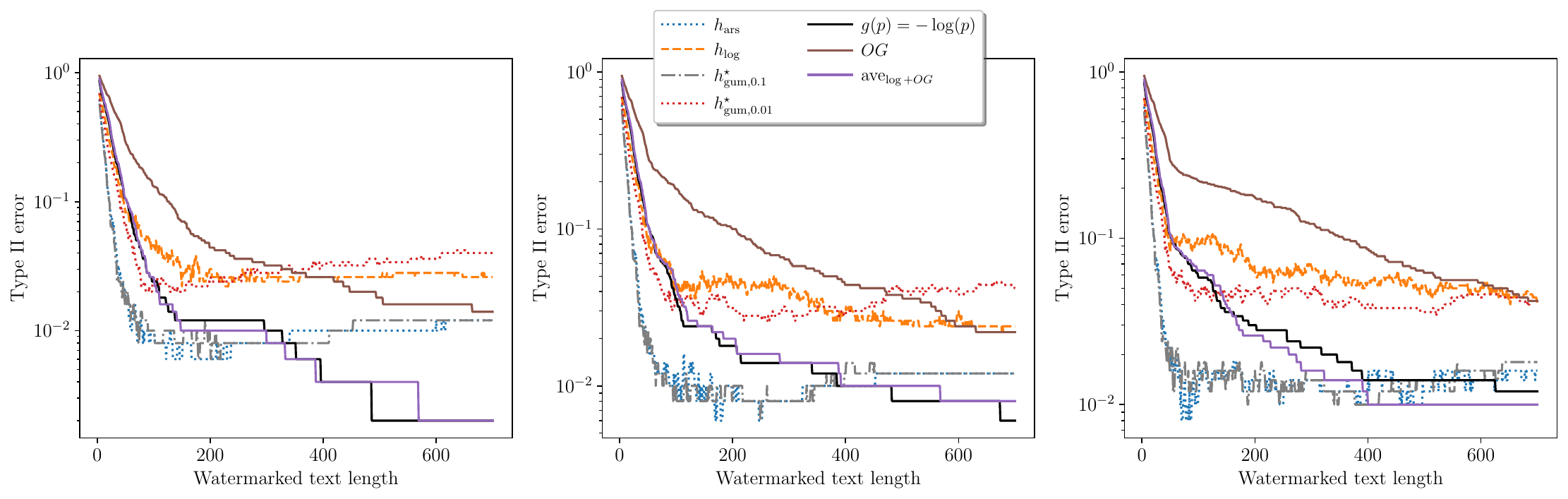}
    \caption{Average Type II error (log scale) versus text length on the OPT-1.3B model with temperature 0.5 under human editing after the first 50 tokens. The editing proportions are 0.1 (left), 0.3 (middle), and 0.5 (right), respectively.}
    \label{fig2}
\end{figure}

Together with the simulation results in Figure \ref{fig0}, we have the following findings. 
\begin{enumerate}
\item  Only the e-process methods demonstrate robust Type I error control in both fixed-sample and sequential testing setups. In contrast, the sum-based methods can only provide (asymptotic) Type I error control in the non-sequential setup. 
In the setting of sequential detection, the Type I error quickly explodes, making these non-e-process methods unsuitable for the online testing task.
\item   Some sum-based methods, especially $h_{\text{ars}}$,   exhibit better empirical power (one minus Type II error rate) compared to the e-process methods  in both the simulated setting and the LLM-generated-text setting. The superior power is expected, as these methods offer weaker statistical validity guarantee.
\item Among the three e-process methods,
the OG e-process 
and the average e-process 
perform similarly  in the simulated setting, visibly better than the weight-adaptive e-process. The weight-adaptive e-process and the average e-process perform similarly in the LLM-generated-text setting, visibly better than the OG  e-process. We would therefore recommend the average e-process in practice. 
\item 
The average e-process method
has roughly comparable power to the sum-based methods. 
Perhaps surprisingly,
while providing sequential validity, it can even have higher power than the best sum-based methods in some settings of token length and temperature; see the top panel of Figure \ref{fig1}. 

\item  While Type II error rates for all methods tend to decrease as the text sequences grow longer, sum-based methods exhibit a slight inflation in Type II error rates under low-temperature settings. 
This inflation in sum-based methods likely stems from the emergence of degenerate NTPs at lower temperatures, particularly in extended sequences—a phenomenon documented in recent studies \citep{pee24temperature,LI25temperature}. Since tokens generated from degenerate distributions are deterministic and independent of the watermark's pseudo-randomness, they act as noise that interferes with sum-based detection. In contrast, e-process methods maintain a consistent and monotonic reduction in Type II errors, making them well suited to such situations.

\item As the proportion of human editing increases, all methods exhibit higher Type II error rates for a given text length under the same model temperature setting. Despite this, our methods still achieve power comparable to existing approaches.
\end{enumerate}

\section{Conclusion}
\label{sec:conclude}

In this paper, we presented a unified framework for LLM watermark detection based on e-processes, providing anytime-valid guarantees both theoretically and empirically. 
We show in Theorem \ref{th:revese} that our proposed method is the only admissible and unbiased method for this testing problem under some mild assumptions, satisfied by, e.g., the Gumbel-max watermark. 
Various empirically adaptive e-process construction methods are developed to improve the testing power. In addition,  our results analyze the power properties of the proposed procedures, illustrating their statistical efficiency and asymptotic behavior. Experimental evaluations further demonstrate the advantages of the proposed framework over existing watermark detection methods. 

The present study remains preliminary in several aspects. The theoretical characterization of optimal e-process constructions is not yet complete, especially with respect to achieving the theoretical upper bounds on the e-power. Furthermore, the selection and design of statistically efficient pivotal statistics tailored to different watermark mechanisms warrant deeper investigation. Future work may focus on refining the theoretical limits of sequential testing power within this framework, exploring adaptive and data-driven construction strategies, and developing principled criteria for selecting optimal test statistics under varying watermark schemes. These directions are expected to further strengthen the statistical foundations and practical effectiveness of LLM watermark detection.






\bigskip



\newpage

\begin{center}

{\bf\large Appendix: Proofs}

\end{center}

\begin{proof}[Proof of Lemma \ref{lem:gm}]
   By  Proposition C.1 of \cite{marshall2011inequalities}, the sum of univariate convex functions is Schur-convex. 
 Therefore, for statement (i), it suffices to prove that $p\mapsto py^{1/p}$ is convex on $[0,1]$, which follows by straightforward calculus. 
For statement (ii), it suffices to note that $\ntp\preccurlyeq  \ntp^*$
for all $\ntp\in \Delta^{\delta}_K$, which can be checked by definition.
Statement (iii) follows directly from    $F^{\ntp}(y) \le F^{\ntp^*}(y) $ for any $y\in [0,1]$ and $\ntp\in \Delta_K^{\delta}$, as shown in (ii). 
\end{proof}

\begin{proof}[Proof of Proposition \ref{prop:KLent}]
Straightforward calculation yields 
\begin{align*}
\mathrm{KL}(\q^{\ntp}\Vert \uu^\ntp) 
&= \E^{\q^\ntp}\left[ 
\log\frac{\d \q^{\ntp}}{\d  \uu^\ntp} 
\right] 
=\sum_{w\in \mathcal W} P_w  
\E^{\q^\ntp}\left[
\log \frac{\d \q^{\ntp} }{\d \uu^\ntp  }\mid W=w\right].
\end{align*}
For any $w$, the distribution of $\zeta$ under $\q^\ntp$ given $W=w$
is uniform on $ \{ z\in Z: S(\ntp,z )=w\}$, whereas the distribution of $\zeta$ under
$\uu^{\ntp}$ given $W=w$ is uniform on $Z$. 
Therefore, 
  $$\q^\ntp \left (\frac{\d \q^\ntp}{\d\uu^\ntp}=\frac{1}{P_w} \mid W=w\right)=1 .$$
Therefore, 
\begin{align*}
 \sum_{w\in \mathcal W} 
 P_w
\E^{\q^\ntp}\left[
\log \frac{\d \q^{\ntp} }{\d \uu^\ntp  } \mid {W=w} \right]
 =  
 -\sum_{w\in \mathcal W} P_w \log P_w ,
\end{align*}
and hence, $\mathrm{KL}(\q^{\ntp}\Vert \uu^\ntp) = \ent(\ntp)$.  
\end{proof}

\begin{proof}[Proof of Theorem \ref{th:validity}]
We first verify that $E_1,E_2,\dots$ are sequential e-values for the null hypothesis. 
We have 
$$
\E^\p[E_t|\mathcal F_{t-1}] = \E^\p[f_t(1-Y_t)|\mathcal F_{t-1}] = \int_0^1 f_t (p) \d p = 1,
$$
where we have used that $Y_t $ is independent of $\mathcal F_{t-1}$
and that $f_t $ is determined by $Y_1,\dots,Y_{t-1}$ and it integrates to $1$.
Moreover, it is clear that $E_t$ is nonnegative. 
Therefore,
$$
\E^\p[M_t|\mathcal F_{t-1}] = M_{t-1}, ~~~t\in \TP.
$$
Thus, $M$ is a nonnegative  martingale starting at $1$ and hence an e-process.  The probability guarantee follows from Ville's inequality or the optional stopping theorem. 

For the last statement, it suffices to note that, for each $\q$ in the alternative hypothesis, using the superuniformity of $Y_t$ and the fact that $f_t$ is decreasing, we have  $\E^\q[f_t(1-Y_t)|\mathcal F_{t-1}] \ge 1$ 
for all $t\in \TP$.
Therefore, $\E^\q[M_t]\ge 1$. 
 \end{proof}

\begin{proof}[Proof of Theorem \ref{th:revese}]
The ``if'' statement follows from Theorem \ref{th:validity}, and we show the ``only if'' statement.  
  By \citet[Theorem 18]{ramdas2020admissible}, an admissible e-process $M$ is a $\p$-nonnegative martingale  with $M_0=1$.    Let $E_t=M_t/M_{t-1}$ with the convention $0/0=1$.
Note that $\mathcal F_t=\sigma(Y_1,\dots,Y_t)$, and hence $E_t$ can be written as a function $e_t$ of $(Y_1,\dots,Y_t)$. Since $M'$ is a nonnegative martingale and $Y_1,\dots,Y_t$ are independent, we have
$$
\E^\p[e_t(Y_1,\dots,Y_t)|\mathcal F_{t-1}] =\int_0^1 e_t(Y_1,\dots,Y_{t-1},y)  \d y = 1.
$$
Let  $f_t:y\mapsto e_t(Y_1,\dots,Y_{t-1},1-y)$ for $t\in \TP$, which  is Borel measurable in $y$.
We next show, for each $t$, 
$$\p\left(\{\mbox{$f_t$
is decreasing almost everywhere on $(0,1)$}\}\right)=1.$$
The intuition is that if this does not hold true, then a larger $Y_t$ may not lead to a larger $E_t$, which yields $\E^\q[M_t]<1$ for some $\q$ in  the alternative hypothesis. 

Suppose for contradiction that there exists $t\in \TP$ such that the set $$A: =\{\mbox{$f_t$
is not decreasing almost everywhere on $(0,1)$}\} \in \mathcal F_{t-1} $$
has positive probability.  
For every value of $\mathbf y\in[0,1]^{t-1}$   taken by $\mathbf Y:=(Y_1,\dots,Y_{t-1})$ on the event $A$, there exists $x^\mathbf y\in(0,1)$ such that the set
$$B^\mathbf y:=\{y\in (x_0,1): e_t(\mathbf y, y)
>e_t(\mathbf y, x^\mathbf y)\}$$ has positive  Lebesgue measure.

We will now construct a probability measure $\q$ in the alternative hypothesis through the distribution of $Y_t$ under $\q$. Comparing $\p$ and $\q$, we will only modify the distribution of $Y_t$ on the event $A$, and everything stays unchanged outside $A$. 
Let 
$U$ be a uniform random variable on $[0,1]$ independent of everything else. 
Define the function $h^{\mathbf y}(u)=x_0 \id_{\{u\in B^{\mathbf y}\}}
+u\id_{\{u\not \in B^{\mathbf y}\}} 
$ for
$y\in (0,1)$.
It is clear that  $h^{\mathbf y}(u)\le u$ for $u\in (0,1)$ and hence $1-h^{\mathbf y}(U)$ is superuniform. 

Under $\q$, we let  $Y_t$ be distributed as  $1-h^{\mathbf Y}(U)$ on event $A$, which is superuniform; $(Y_s)_{s\in\TP}$ remains an independent sequence with  $Y_s$  uniformly distributed  for $s\ne t$. 
The measurability of $Y_t$ (to be well-defined as a random variable) is guaranteed by the assumption of independence. 
Because $B^{\mathbf y}$ has positive Lebesgue measure, $e_t(\mathbf y, y)>e_t(\mathbf y,x^{\mathbf y})$ for $y\in B^{\mathbf y}$ and $\int_0^1 e_t(\mathbf y, u) \d u=1$, we have 
$$
\int_0^1 e_t(\mathbf y, h^{\mathbf y}(u))\d u <\int_0^1 e_t(\mathbf y, u)\d u  =1.
$$
Therefore, 
\begin{align*}
\E^\q[f_t(1-Y_t)|\mathcal F_{t-1}]&=
\E^\q[f_t(h^{\mathbf Y}(U))|\mathcal F_{t-1}]\id_A+\E^\q[f_t(  U)|\mathcal F_{t-1}]\id_{A^c}\\&=
\left(\int_0^1 e_t(\mathbf Y, h^{\mathbf Y}(u))\d u\right) \id_{A} + \id_{A^c}.
\end{align*}
Since $Z_t:=\int_0^1 e_t(\mathbf Y, h^{\mathbf Y}(u))\d u<1$ on $A$ and $A$ has positive probability under both $\p$ and $\q$,   
we obtain
\begin{align*}
\E^\q[M_t]&=
\E^\q\left[M_{t-1}\E^\q[f_t(h^{\mathbf Y}(U))|\mathcal F_{t-1}]  \right]
 =
\E^\q\left[ M_{t-1} Z_t \id_A+ M_{t-1}\id_{A^c}\right] <\E^{\q}[M_{t-1}]=1. 
\end{align*}
 Therefore, $\E[M_t]<1$, a contradiction. Therefore, $A$ must have zero probability, and \eqref{eq:dominance} holds with $f_t\in \Phi_t$ for each $t$. 
\end{proof}

\begin{proof}[Proof of Theorem \ref{th:fixed}]
Let $E_t=g(1-Y_t)$ for $t\in \TP$. The distribution of $E_t$ depends on $\ntp_t$ through \eqref{eq:FPY}, and under $\q$ the sequence $(\log (1-\lambda+\lambda E_t)_{t\in \TP}$ is $m$-dependent.  
Let $\lambda$ be taken as $\lambda_0>0$ in 
  Lemma \ref{lem:E1} below.
  Using that lemma,
there exists a constant $\epsilon>0$ such that  $$\E^{\q}[ \log (1-\lambda+\lambda E_t) \mid  \ntp_t ] \ge  \epsilon. 
$$ 
Note that, since $\d F^{\ntp}/\d \mathbb{U}\le K$ by \eqref{eq:FPY-den} and $Y_t$ is uniformly distributed under $\p$, we have 
\begin{align*}
    \E^{\q}[(\log (1-\lambda+\lambda E_t))^2\mid \ntp_t]
    &\le K \E^{\p}[(\log (1-\lambda+\lambda E_t))^2].
\end{align*}
Since $1-\lambda+\lambda E_t$, $t\in \TP$ are identically distributed under $\p$ with finite mean,  $\E^{\p}[(\log (1-\lambda+\lambda E_t))^2]$ is bounded above across $t$.
Hence, $  \E^{\q}[(\log (1-\lambda+\lambda E_t))^2]$ is also bounded above.
This allows us to use the strong law of large numbers for $m$-dependent sequences (e.g., Theorem 1 of \citealp{hansen1991strong}),
which gives  
$$\frac 1T \left(\log M_T -\E^{\q^\ntp}[\log M_T]\right) \to 0 \mbox{ almost surely.}$$
Consequently,  
$ 
    \liminf_{T\to\infty}  ( \log M_T )/T
   \ge \epsilon >0
 $
almost surely.
\end{proof}

\begin{proof}[Proof of Lemma \ref{lem:E1}]
Let  $U$ be uniformly distributed on $[0,1]$ under $\p$. 
By Lemma \ref{lem:gm}, 
for any $\bP\in \Delta_K^{\delta}$, we have 
     \begin{align}
\min_{\bP\in \Delta_K^{\delta}} \E^{\q^{\bP}} [ \log (1-\lambda+\lambda g(1-Y))  ]
= \E^{\q^{\bP^*}} [ \log (1-\lambda+\lambda g(1-Y))   ], \label{eq:lem-E1}
\end{align}
where $\ntp^*=(1-\delta,\delta,0,\dots,0)$. Note that for any $\ntp\in \Delta^\delta_K$, 
     \begin{align*}
\E^{\q^{\bP}} [ g(1-Y)  ]& = \E^{\p}\left[\sum_{w\in \mathcal W} U^{1/P_{w}-1}  g(1-U)   \right]
\\&  > \E^{\p}\left[\sum_{w\in \mathcal W} U^{1/P_{w}-1}  \right]
\E^{\p}\left[ g(1-U) \right]
 =  1,
\end{align*}
where the inequality follows from Chebyshev's association inequality, using that $g$ is nonconstant. 
This yields, in particular, $\E^{\q^{\bP^*}} [g(1-Y)]>1$. By using Theorem 3.14 of \cite{RW24},
this further implies 
\begin{equation}
\label{eq:lambda0}\E^{\q^{\bP^*}} [\log (1-\lambda_0+\lambda_0 g(1-Y))]>0
\end{equation} for some  $\lambda_0>0$.
 Since the function $\lambda \mapsto \log (1-\lambda+\lambda g(1-Y))$ is concave and takes the value $0$ at $\lambda=0$, 
 we know that  it is positive  on $ (0,\lambda_0].$
Therefore, using \eqref{eq:lem-E1}, 
we get the desired statement.  
\end{proof}

\begin{proof}[Proof of Theorem \ref{th:OG}]
First, consider the weight-adaptive e-process $M$ in \eqref{eq:EAE}.
For $t\in \TP$, let $E_t=g(1-Y_t)$.
 Assumption \ref{assm:power}  guarantees $\E^{\q}[E_t]>1$.
Let
$$
\lambda^* =\argmax_{\lambda \in [0,\gamma]}\E^{\q}[\log (1-\lambda +\lambda E_t)].
$$
Note that  the optimizer  $\lambda^*$ is unique because of the strict concavity of the log function,   $\lambda^*$ does not depend on $t$ because $E_1,E_2,\dots$ are iid, and $\lambda^*>0$, which follows from Theorem 3.14 of \cite{RW24} and  the condition $\E^{\q}[E_t]>1$. 
Note that \begin{equation}
    \label{eq:Lstar}
    \E^{\q}[\log (1-\lambda^* +\lambda^* E_1)] >0
    \end{equation} because $\lambda^*>0$ and it is finite, as  guaranteed by Assumption \ref{assm:power}.
Let 
$$
    M^*_t=
\prod_{s=1}^t (1-\lambda^*+\lambda^*E_s),~~~t\in \T. 
$$
Proposition EC.2 of \cite{wang2025backtesting}
gives
$$
\frac{1}{T}(\log M_T- \log M^*_T)=0 \mbox{~~~in $L^1$ under $\q$.}
$$
By the strong law of large numbers, 
$$
\lim_{T\to\infty} \frac 1T \log M^*_T  = \E^{\q}[\log (1-\lambda^* +\lambda^* E_1)]>0 \mbox{~almost surely.}
$$
 Thus, we get the desired  exponential growth of the weight-adaptive e-process.  
 
To show the statement on the OG e-process $M$ in \eqref{eq:OGE} with range $[a,b]$, $0<a<1<b<\infty$, we use Theorem 3.1 of \cite{hore2026online},
which gives
$$
\frac{1}{T}(\log M_T- \log \widetilde M_T)=0 \mbox{~~~almost surely,}
$$
where $\widetilde M_t=
\prod_{s=1}^t  f^*(1-Y_t),~t\in \T,
$
and $f^*$ maximizes $\E^{\q}[\log f(1-Y_t)]$ over all decreasing densities in $\mathcal D$ with range $[a,b]$. Take any nonconstant calibrator $g$ that takes values in $[a,b]$, and let $\lambda^*>0$ be as in the first part of the proof, which satisfies \eqref{eq:Lstar}. Since $1-\lambda^* +\lambda^* g$
falls in $[a,b]$, we have
$$
\E^\q[\log f^*(1-Y_t)]
\ge \E^{\q}[\log (1-\lambda^* +\lambda^* g(1-Y_t))]>0.
$$
Therefore, the law of large numbers gives 
 $$ \lim_{T\to\infty} \frac 1T \log \widetilde M_T =\E^\q[
\log f^*(1-Y_t)]  >0~~~\mbox{almost surely.}$$
This yields the desired statement on the OG e-process (even stronger, almost sure convergence).

The last statement on the average e-process follows directly from the corresponding statement on the weight-adaptive e-process.
\end{proof}

\newpage

\bibliographystyle{apalike}
 \bibliography{ref} 
\end{document}